\newcommand{\emptyans}{[\quad]}
\newcommand{\false}{\operatorname{false}}
\newcommand{\true}{\operatorname{true}}
\newcommand{\N}{\mathbb{N}}
\newcommand{\Z}{\mathbb{Z}}
\newcommand{\Q}{\mathbb{Q}}
\newcommand{\R}{\mathbb{R}}
\newcommand{\msub}{\mathop{/\kern-3pt/}}
\newcommand{\ee}{\mathbf{e}}
\newcommand{\ii}{\mathbf{i}}
\newcommand{\uu}{\mathbf{u}}
\newcommand{\vv}{\mathbf{v}}
\newcommand{\sss}{\mathbf{s}}
\newcommand{\aaa}{\mathbf{a}}
\newcommand{\sgn}{\operatorname{sgn}}
\newcommand{\me}{\mathrel{\dot=}}
\theoremstyle{definition}
\newtheorem{lemma}{Lemma}
\newtheorem{proposition}[lemma]{Proposition}
\theoremstyle{plain}
\newtheorem{corollary}[lemma]{Corollary}
\newtheorem{theorem}[lemma]{Theorem}
\begin{document}
\title{Better Answers to Real Questions}

\author{Marek Košta and Thomas Sturm\\
Max-Planck-Institut für Informatik\\
Saarbrücken, Germany\\
\url{{mkosta,sturm}@mpi-inf.mpg.de}
\and
Andreas Dolzmann\\
Leibniz-Zentrum für Informatik\\
Saarbrücken, Germany\\
\url{andreas.dolzmann@dagstuhl.de}}

\date{January 21, 2015}

\maketitle

\begin{abstract}
  We consider existential problems over the reals. Extended quantifier
  elimination generalizes the concept of regular quantifier elimination by
  providing in addition answers, which are descriptions of possible assignments
  for the quantified variables. Implementations of extended quantifier
  elimination via virtual substitution have been successfully applied to various
  problems in science and engineering. So far, the answers produced by these
  implementations included infinitesimal and infinite numbers, which are hard to
  interpret in practice. We introduce here a post-processing procedure to
  convert, for fixed parameters, all answers into standard real numbers. The
  relevance of our procedure is demonstrated by application of our
  implementation to various examples from the literature, where it significantly
  improves the quality of the results.
\end{abstract}

\section{Introduction}
We consider existential problems over the reals. Extended quantifier elimination
generalizes the concept of regular quantifier elimination by providing in
addition answers, which are descriptions of possible assignments for the
quantified variables~\citep{Weispfenning:94b,Weispfenning:97d}. Implementations
of extended quantifier elimination~\citep{DolzmannSturm:97a,DolzmannSturm:96c}
via virtual
substitution~\citep{LoosWeispfenning:93a,Weispfenning:97b,Weispfenning:1994a,Weispfenning:88a,DolzmanEtAl:1999a}
have been successfully applied to various problems in science and
engineering~\citep{SturmWeispfenning1997,Sturm:99a,Weispfenning2001,SturmWeber:08a,SturmWeber:09a,ErramiSturm:11a,WeberSturm:11a,ErramiEiswirth:13a,Sturm1999}.

So far, the answers produced by these implementations included infinitesimal and
infinite numbers, which are hard to interpret in practice. This has been
explicitly criticized in the literature, e.g., by~\cite{Collard:03a}. In the
present article, we introduce a complete post-processing procedure to convert,
for fixed values of parameters, all answers into standard real numbers. We
furthermore demonstrate the successful application of an implementation of our
method within Redlog~\citep{DolzmannSturm:97a} to a number of extended
quantifier elimination problems from the scientific literature including
computational geometry~\citep{SturmWeispfenning1997}, motion
planning~\citep{Weispfenning2001}, bifurcation analysis for models of genetic
circuits and for mass action~\citep{SturmWeber:08a,SturmWeber:09a}, and sizing
of electrical networks~\citep{Sturm1999}.

The plan of the paper is as follows: In Section~\ref{SE:eqe} we make ourselves
familiar with the concept of extended quantifier elimination. In
Section~\ref{SE:vs} we give an introduction of virtual substitution for extended
quantifier elimination to the extent necessary to understand how nonstandard
values enter the answers and what information is available for fixing them to
standard values. Section~\ref{SE:main} is the technical core; we describe and
prove our method and illustrate it by discussing one example in detail. In
Section~\ref{SE:shift} we revisit \emph{degree shifts}, a successful heuristics
for reducing the degree of quantified variables before their elimination. We
re-interpret these degree shifts as quantifier eliminations by virtual
substitution. This allows us in Section~\ref{SE:ext} to generalize our method to
cover also possible degree shifts during elimination. In Section~\ref{SE:ex} we
revisit examples from the scientific literature where the application of
extended quantifier elimination to various problems from planning, modeling,
science, and engineering had yielded nonstandard answers. In all cases we can
efficiently fix all nonstandard symbols to standard values using our
implementation of the method as it was introduced in Section~\ref{SE:main} and
generalized in Section~\ref{SE:ext}. This significantly improves the quality of
the results from a practical point of view. Finally, in
Section~\ref{SE:conclusions} we summarize our results and discuss possible
extensions of our method.

\section{The Concept of\\ Extended Quantifier Elimination}\label{SE:eqe}
For our purposes here, we restrict ourselves to existential problems
\begin{displaymath}
  \varphi(u_1,\dots,u_m)=\exists x_n\dots\exists x_1\psi(x_1,\dots,x_n,u_1,\dots,u_m)
\end{displaymath}
in the Tarski language $L=(0,1,+,-,\cdot,\leq,<,\geq,>,\neq)$ interpreted in the
Tarski algebra $(\R,0,1,+,-,\cdot,\leq,<,\geq,>,\neq)$. As usual in algebraic
model theory, the symbol ``$=$'' and its interpretation as equality is part of
first-order logic so that it does not occur explicitly in the language $L$.
Without loss of generality, $\psi$ is an $\land$-$\lor$-combination of atomic
constraints, and we agree that all right hand sides of the atomic constraints
are $0$.

\emph{Extended quantifier elimination} applied to $\varphi$ yields an
\emph{extended quantifier elimination result (EQR)}
\begin{displaymath}
  \addtolength{\arraycolsep}{0.25em}
  \left[
    \begin{array}{c|ccc}
      \beta_1(\uu) & x_1=e_{11}(\uu) & \dots & x_n=e_{1n}(\uu)\\
      \vdots & \vdots & \ddots & \vdots\\
      \beta_k(\uu) & x_1=e_{k1}(\uu) & \dots & x_n=e_{kn}(\uu)
    \end{array}
  \right].
\end{displaymath}
The \emph{conditions} $\beta_i(\uu)$ are quantifier-free Tarski formulas such
that $\R\models\varphi\longleftrightarrow\bigvee_{i=1}^k\beta_i$. In other
words, $\bigvee_{i=1}^k\beta_i$ is a regular quantifier elimination result for
$\varphi$, and extended quantifier elimination generalizes regular quantifier
elimination. The \emph{answers} $\ee_i(\uu)$ are terms in an extension language
of the Tarski language. For $\aaa\in\R^m$, if $\varphi(\aaa)$ holds, then at
least one $\beta_i(\aaa)$ holds, and so does $\psi(\ee_{i}(\aaa),\aaa)$. We
agree that ``$\false$'' never occurs as a condition. If $\varphi$ itself is
equivalent to ``$\false$,'' we possibly obtain the empty scheme $\emptyans$.

As an example, consider the input formula
\begin{displaymath}
  \varphi =\exists x\exists y\psi,\quad \psi=a y + 3 x^{2} + 4 x \leq a \land
  x\geq a\geq y.
\end{displaymath}
A possible extended quantifier elimination result for $\varphi$ is given by
\begin{displaymath}
  \addtolength{\arraycolsep}{0.25em}
  \left[\begin{array}{l|ll}
      a \neq 0 \land 4 a + 3 \geq 0 & y = - 3 a - 3 & x = a\\[2ex]
      a\leq 0 \land 3 a^{2} - 3 a - 4 \leq 0 &
      y = a & \displaystyle x = \frac{\sqrt{ - 3 a^{2} + 3 a + 4} - 2}{3\strut}
    \end{array}\right].
\end{displaymath}
From this extended quantifier elimination result we can derive a regular
quantifier elimination result
\begin{displaymath}
  (a \neq 0 \land 4 a + 3 \geq 0)\lor
  (a\leq 0 \land 3 a^{2} - 3 a - 4 \leq0),
\end{displaymath}
which can be simplified to $a\geq0 \lor 3 a^{2} - 3 a - 4 \leq0$. Hence,
$\varphi$ holds if and only if $a\geq\alpha$, where $\alpha\approx - 0.758306$
is the smaller root of $3 a^{2} - 3 a - 4$.

In the extended quantifier elimination result, the first row covers the case
that $-0.75\leq a$ and $a\neq 0$, while the second row covers $\alpha\leq
a\leq0$. Let us consider some particular interpretations of $a$:
\begin{itemize}
\item For $a=2$, the condition in the first row holds and the corresponding
  answers yield $x=2$ and $y=-9$. In fact, these three values satisfy $\psi$.
  The condition in the second row, in contrast, does not hold. If we plug $a=2$
  into the corresponding answers anyway, then we obtain a negative argument for
  the square root, which cannot be interpreted over the reals.
\item For $\alpha<a=-0.7525<-0.75$, the condition in the second row holds and
  the corresponding answers yield $x=\frac{\sqrt{6997} - 800}{1200}$ and
  $y=-0.7525$. Again, these three values satisfy $\psi$. Now the condition in
  the first row does not hold. If we plug $a=-0.7525$ into the corresponding
  answers anyway, then we obtain $x=-0.7525$ and $y=-0.7425$, which does not
  satisfy $\psi$.
\item For $a=-0.5$, both conditions hold and yield two different sets of values
  satisfying $\psi$, viz.~$x=-0.5$, $y=-1.5$ and $x=\frac{\sqrt{7}-4}{6}$, $y=-0.5$,
  respectively.
\item For $a=0$ only the condition in the second row holds, but the answers in
  the first row happen to work as well. This shows that the conditions are
  sufficient but not necessary for the answers to be valid.
\end{itemize}

\section{The Method of Virtual Substitution}\label{SE:vs}
Given $\varphi(\uu)=\exists x\psi(x,\uu)$, we compute a finite \emph{elimination
  set}
\begin{equation}
  \label{EQ:vs}
  E=\bigl\{\dots,\bigl(\gamma(\uu), e(\uu)\bigr), \dots,\bigr\}
  \quad\text{such that}\quad
  \exists x\psi\longleftrightarrow\bigvee_{(\gamma,e)\in
    E}\gamma\land\psi[x\msub e].
\end{equation}
In the elimination set $E$ the $e(\uu)$ are \emph{elimination terms} substituted
for the quantified variable $x$ via a \emph{virtual substitution} $[x\msub e]$.
The $\gamma$ are quantifier-free Tarski formulas serving as \emph{substitution
  guards}. Equation (\ref{EQ:vs}) formally describes regular quantifier
elimination of one quantifier $\exists x$ from $\varphi$. For the elimination of
several quantifiers, one assumes without loss of generality that the formula is
prenex and processes the prenex quantifier block from the inside to the outside.

We are now going to give an idea of the exact shape and computation of
elimination sets that is sufficiently precise to understand our main
contribution here. For a more thorough introduction into the theory of
quantifier elimination by virtual substitution we refer to the original
publications by \cite{Weispfenning:88a,Weispfenning:1994a,Weispfenning:97b},
\cite{LoosWeispfenning:93a}, and \cite{DolzmanEtAl:1999a}. In
Section~\ref{SE:VSweak} we restrict ourselves to input formulas $\varphi$ not
containing any strict inequalities ``$<$,'' ``$>$,'' or ``$\neq$.'' In that
course it will become clear how exactly we derive extended quantifier
elimination from the virtual substitution procedure.

Later, in Section~\ref{SE:VSstrict}, we generalize to formulas containing also
strict inequalities. While with regular quantifier elimination the techniques
used in the course of the generalization to strict inequalities remain
completely transparent to the user, with extended quantifier elimination they
leave visible traces in the answers by possibly introducing certain nonstandard
elements, which do not have a straightforward interpretation over the reals. The
purpose of this paper is to convert these answers to real numbers, given fixed
values for the parameters.

\subsection{Virtual Substitution for Weak Inequalities}
\label{SE:VSweak}
Recall that our constraints are normalized such that their right hand sides are
$0$. Assume that all occurrences of $x$ in $\varphi(\uu)=\exists x\psi(x,\uu)$
are at most quadratic. Consider fixed real interpretations $\uu=\aaa\in\R^m$ for
all parameters. Then all constraints in $\psi(x,\aaa)$ become univariate, and
the set $\{\,c\in\R\mid \R\models\psi(c,\aaa)\,\}$ is a finite union of real
intervals, where the interval endpoints are zeros of the univariate left hand
side polynomials.

Our goal is to include at least one such interval endpoint into our elimination
set $E$: For each constraint
${f_2(\uu)x^2+f_1(\uu)x+f_0(\uu)\mathrel{\varrho}0}$, with a weak relation
$\varrho\in\{=, \leq, \geq\}$ and discriminant $\Delta=f_1^2-4f_2f_0$ we add to
$E$ three pairs $(\gamma, e)$ as follows:
\begin{multline*}
  \biggl(f_2\neq0\land \Delta\geq0,\
  \frac{-f_1+\sqrt{\Delta}}{2f_2}\biggr),\quad\\
  \biggl(f_2\neq0\land \Delta\geq0,\
  \frac{-f_1-\sqrt{\Delta}}{2f_2}\biggr),\quad
  \biggl(f_2=0\land f_1\neq0,\ -\frac{f_0}{f_1}\biggr).
\end{multline*}
In order to obtain a quantifier-free equivalent for $\varphi$, such pairs have
to be plugged into $\varphi$ according to~(\ref{EQ:vs}). To start with, note
that the substitution guards $\gamma$ make the substitution terms $e$ meaningful
by ensuring that denominations are not zero and arguments to square roots are
not negative.

Next, observe that our elimination terms $e$ are not terms in the Tarski
language as they contain division as well as root symbols. It is one central
idea of the virtual substitution approach that the substitution operator does
not map $L$-terms to $L$-terms but atomic $L$-formulas to quantifier-free
$L$-formulas:
\begin{displaymath}
  [x\msub t]:\text{atomic $L$-formulas} \to \text{quantifier-free $L$-formulas}
\end{displaymath}
Note that when there is more than one quantifier, it is crucial to obtain
$L$-formulas in~(\ref{EQ:vs}) in order to be able to proceed.

To give an impression of virtual substitution, we describe here the substitution
$(f=0)[x\msub \frac{g_1+g_2\sqrt{g_3}}{g_4}]$ of a \emph{root expression}
\begin{displaymath}
  \frac{g_1+g_2\sqrt{g_3}}{g_4},\qquad g_i\in\Z[\uu]
\end{displaymath}
into an equation $f=0$, where $f\in\Z[\uu][x]$ of arbitrary degree: It is easy
to see that there are $g_1^*$, $g_2^*$, $g_4^*$ such that
\begin{displaymath}
  f\biggl(\frac{g_1+g_2\sqrt{g_3}}{g_4}\biggr)=\frac{g_1^*+g_2^*\sqrt{g_3}}{g_4^*}
\end{displaymath}
is again a root expression. Using this intermediate result, we transform
\begin{eqnarray*}
  \frac{g_1^*+g_2^*\sqrt{g_3}}{g_4^*}=0
  &\me&
  g_1^{*}+g_2^{*}\sqrt{g_3}=0\\[0.5ex]
  &\me&
  |g_1^{*}|=|g_2^{*}\sqrt{g_3}|\land{}\\
  &&\quad\bigl(\sgn(g_1^*)\neq\sgn(g_2^*)\lor\sgn(g_1^*)=\sgn(g_2^*)=0\bigr)\\[0.5ex]
  &\me& {g_1^*}^2-{g_2^*}^2g_3=0\land g_1^*g_2^*\leq0.
\end{eqnarray*}
Technical details and formal descriptions of virtual substitutions for all our
relations have been given by \cite{Weispfenning:97b}.

Let us now apply these ideas to extended quantifier elimination of several
existential quantifiers via virtual substitution. Given
\begin{displaymath}
  \exists x_n\dots\exists x_1\psi(x_1,\dots, x_n,u_1,\dots,u_m)
\end{displaymath}
our intended result is a scheme as described in Section~\ref{SE:eqe}:
\begin{displaymath}
  \addtolength{\arraycolsep}{0.25em}
  \left[
    \begin{array}{c|ccc}
      \beta_1(\uu) & x_1=e_{11}(\uu) & \dots & x_n=e_{1n}(\uu)\\
      \vdots & \vdots & \ddots & \vdots\\
      \beta_k(\uu) & x_1=e_{k1}(\uu) & \dots & x_n=e_{kn}(\uu)
    \end{array}
  \right].
\end{displaymath}
We successively apply~(\ref{EQ:vs}) to $x_1$, \dots,~$x_n$ using elimination
sets $E_1(x_2,\dots,x_n,\uu)$, \dots, $E_n(\uu)$ to obtain $\beta_1$,
\dots,~$\beta_k$ as follows:
\begin{equation}
  \label{EQ:betadef}
  \bigvee\limits_{(\gamma_n,e_n)\in E_n}\dots\bigvee\limits_{(\gamma_1,e_1)\in E_1}\
  \underbrace{\gamma_n\land\bigl(\dots\land
    \bigl(\gamma_1\land\psi[x_1\msub e_1]\bigr)
    \cdots\bigr)[x_n\msub e_n]}_{\beta_i(\uu)}.
\end{equation}
The index $i$ of $\beta_i$ describes one choice
$\bigl((\gamma_n,e_n),\dots,(\gamma_1,e_1)\bigr)$ from the Cartesian product
$E_n\times\dots\times E_1$. In practice, the $\beta_i$ obtained this way undergo
sophisticated simplification methods such as those described
by~\cite{DolzmannSturm:97c}. Recall from the previous section that $\beta_i$
that become ``$\false$'' are ignored. It is important to understand that for the
computation of the $\beta_i$ we are using exclusively \emph{virtual}
substitution.

The corresponding $\ee_i(\uu)$, in contrast, are obtained from $e_n(\uu)$,
$e_{n-1}(x_n,\uu)$, \dots,~$e_1(x_2,\dots,x_n,\uu)$ via \emph{regular
  back-substitution} of terms in an suitable extension language of $L$:
\begin{equation}
  \label{EQ:backsub}
  e_{i,n}=e_n,\quad
  e_{i,n-1}=e_{n-1}[x_n/e_{i,n}],\quad
  e_{i,1}=e_1[x_1/e_{i,2}]\dots[x_n/e_{i,n}].
\end{equation}

Note once more that the back-substitution possible creates objects like
\begin{displaymath}
  u_1+\frac{3\sqrt{\sqrt{5u_1-u_2}-2}}{2},
\end{displaymath}
which are not $L$-terms or even root expressions. It is thus suitable for
obtaining the $\ee_i$ but not for the $\beta_i$. Vice versa, virtual
substitution requires atomic formulas as input. Thus it is suitable for
obtaining the $\beta_i$ while for the $\ee_i$ it is not applicable at all.
Virtual substitution and regular term substitution are independent concepts,
which complement each other.

\subsection{Virtual Substitution with Strict Inequalities}
\label{SE:VSstrict}
Let us return to the elimination of a single quantifier from $\exists
x\psi(x,\uu)$. Recall from the beginning of the previous Section~\ref{SE:VSweak}
how we considered fixed interpretation $\uu=\aaa\in\R^m$ causing the set
$S=\{\,c\in\R\mid \R\models\psi(c,\aaa)\,\}$ to be a finite union of intervals.

When considering in addition strict inequalities, the intervals in $S$ are
possibly open. Consequently, for a \emph{strict constraint}
\begin{equation}
  \label{EQ:constraint}
  \psi_i(x,\uu)\me
  f_{i2}(\uu)x_i^2+f_{i1}(\uu)x_i+f_{i0}(\uu)\mathrel{\varrho_i}0,\qquad
  \varrho_i\in \{<,>,\neq\},
\end{equation}
contained in $\psi(x,\uu)$ we cannot use the zeros $z_i(\uu)$ of the left hand
side but need a point from \emph{inside} the corresponding interval.

In early versions of virtual substitution methods for the linear case,
\cite{Weispfenning:88a} used \emph{arithmetic means} $\frac12(z_i+z_j)$ for all
pairs $(\psi_i,\psi_j)$ of strict constraints. However, the size of the
elimination set then grows quadratically in the number of constraints, which
turned out to be critical for the practical performance of the
method~\citep{Burhenne:90a,LoosWeispfenning:93a}. For the quadratic case,
observe that expressions
\begin{displaymath}
  \frac12\left(
    \frac{-g_{i1}\pm\sqrt{\Delta_{i}}}{2g_{i2}}+\frac{-g_{j1}\pm{\sqrt{\Delta_j}}}{2g_{j2}}
  \right)
\end{displaymath}
are not root expression of the form
\begin{displaymath}
  \frac{g_{1}^*+g_2^*\sqrt{\Delta^*}}{g_{3}^*},
\end{displaymath}
so that Weispfenning's (\citeyear{Weispfenning:97b}) virtual substitution rules
sketched in the previous Section~\ref{SE:VSweak} cannot be used.

The established approach for strict inequalities uses nonstandard extensions of
$\R$: Let $\varepsilon\in\R^*\supset \R$ be a positive infinitesimal number,
i.e., $0<\varepsilon<x$ for all $0<x\in\R$. Then for a strict constraint as
defined in~(\ref{EQ:constraint}) we use four test points
\begin{displaymath}
  \frac{-f_{i1}\pm\sqrt{\Delta_i}}{2f_{i2}}\pm\varepsilon.
\end{displaymath}
As an optimization, it suffices to consider only upper bounds using
$-\varepsilon$. For solution sets that are unbounded from above we have to add
only one more point $\infty:=1/\varepsilon\in\R^*$ for the entire problem.

For the application of the elimination set as described in~(\ref{EQ:vs}) and
thus for the computation of the $\beta_i$ with extended quantifier elimination,
both $\varepsilon$ and $\infty$ are equivalently translated into the Tarski
language $L$ via virtual substitution. For instance, let $t$ be a standard term.
Then
\begin{multline*}
  (x^3+x^2-x-1<0)[x\msub t-\varepsilon]\me{}\\
  (x^3+x^2-x-1<0\lor(x^3+x^2-x-1=0\land (3x^2+2x-1>0 \lor{}\\
  (3x^2+2x-1=0 \land (6x+2<0 \lor (6x+2=0\land 6 > 0))))))[x\msub t].
\end{multline*}
In a subsequent step, $t$ can be virtually substituted for $x$ as discussed in
the previous section. For understanding the principal idea, notice that
$3x^2+2x-1$, $6x+2$, and $6$ are the first, second, and third derivatives of
$x^3+x^2-x-1$, respectively. For the substitution of $\infty$ we have, e.g.,
\begin{displaymath}
  (ax^2+bx+c<0)[x\msub\infty]\me
  a<0 \lor (a=0 \land b<0)\lor(a=0\land b=0 \land c<0).
\end{displaymath}
Again, precise definitions and proofs have been given by
\cite{Weispfenning:97b}.

When again performing regular back-substitution of terms on the side of the
$\ee_i$, nonstandard symbols cannot be removed but are propagated along the way.
In the final result a single answer $\ee_i$ can even contain several of such
nonstandard symbols. For example, on input of
\begin{equation}
  \label{EQ:epsexin}
  \varphi=\exists x\exists y\psi,\quad \psi\me a y + 3 x^{2} + 4 x < 0
  \land x>y>a
\end{equation}
we obtain a nonstandard extended quantifier elimination result:
\begin{equation}
  \label{EQ:epsex}
  \addtolength{\arraycolsep}{0.25em}
  \left[
    \begin{array}{l|lll}
      a + 4 < 0 &
      \displaystyle
      y = \frac{- a - 3 \varepsilon_{1} - 3 \varepsilon_{2} - 4}{3} &
      \displaystyle
      x = \frac{- a - 3 \varepsilon_{1} - 4}{3}\\[2ex]
      a < 0 \land a + 4 > 0 &
      y = - \varepsilon_{1} - \varepsilon_{2} &
      x = -\varepsilon_{1}
    \end{array}
  \right].
\end{equation}
Given such answers containing nonstandard symbols, it is not hard to
non-constructively prove from the elimination procedure that for fixed real
interpretations of the parameter $a$ there are positive real choices
$\varepsilon_1$, $\varepsilon_2\in\R$ so that the answers satisfy $\psi$.
Infinitesimals introduced at different stages of the elimination are indexed
accordingly. It is noteworthy that they have to be chosen differently in
general: Fixing $a=-2$ in the example, it is easy to see that $\varepsilon_1$
has to be chosen different from $\varepsilon_2$ because otherwise we would
obtain $y=2x$, which together with $a=-2$ does not satisfy $a y + 3 x^{2} + 4 x
< 0$.

Until now users of extended quantifier elimination were left alone with results
as in (\ref{EQ:epsex}). In spite of the difficulties discussed above, there is a
considerable record of applications of extended quantifier elimination in the
literature. We are going to discuss some of these with our examples in
Section~\ref{SE:ex}.

We conclude this section with an important observation that for unfixed
parameters it is not possible in general to determine suitable real choices for
nonstandard symbols:
\begin{proposition}[No Standard Answers for Unfixed Parameters]\label{PR:notparametric}
  Consider the formula $\varphi = \exists x(a < x < 1)$ and the nonstandard
  extended quantifier elimination result
\begin{displaymath}
  \addtolength{\arraycolsep}{0.25em}
  \left[
    \begin{array}{l|l}
      a < 1 &
      \displaystyle
      x = 1 - \varepsilon_1
    \end{array}
  \right].
\end{displaymath}
There is no standard choice $\widetilde{\varepsilon_1}\in\R$ such that
\begin{math}
  \addtolength{\arraycolsep}{0.25em}
  \left[
    \begin{array}{l|l}
      a < 1 &
      \displaystyle
      x = 1 - \widetilde{\varepsilon_1}
    \end{array}
  \right]
\end{math}
is an extended quantifier elimination result for $\varphi$ as well.
\end{proposition}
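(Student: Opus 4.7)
The plan is to unfold the definition of extended quantifier elimination result from Section~\ref{SE:eqe} and to reduce the statement to an explicit pointwise condition on $\widetilde{\varepsilon_1}\in\R$ that I can then refute by exhibiting a single bad value of the parameter~$a$.

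First I would recall what it means for
$\left[\begin{array}{l|l} a<1 & x=1-\widetilde{\varepsilon_1}\end{array}\right]$
to be an EQR of $\varphi=\exists x(a<x<1)$: the condition $a<1$ must be equivalent to $\varphi$ (which is obvious), and for every $a\in\R$ for which $\varphi(a)$ holds, the associated answer must satisfy $\psi$, i.e., $a<1-\widetilde{\varepsilon_1}<1$. Thus the task reduces to showing that no real $\widetilde{\varepsilon_1}$ makes $a<1-\widetilde{\varepsilon_1}<1$ hold simultaneously for every $a<1$.

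Next I would observe that the upper inequality $1-\widetilde{\varepsilon_1}<1$ already forces $\widetilde{\varepsilon_1}>0$. Given this, I would plug in the witness $a:=1-\widetilde{\varepsilon_1}/2$: it satisfies $a<1$ and hence makes $\varphi(a)$ true, but it also satisfies $a=1-\widetilde{\varepsilon_1}/2>1-\widetilde{\varepsilon_1}$, contradicting the required lower inequality. This closes the argument.

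There is no serious obstacle here; once the EQR definition is correctly unpacked, the contradiction is a one-line witness. The conceptual point worth emphasizing—and the one the proposition is meant to convey—is that a standard answer term must work uniformly across the entire parameter region carved out by its condition~$\beta_i$, whereas an infinitesimal $\varepsilon_1$ may be chosen afresh for each individual parameter value~$a$. This is exactly what motivates performing the post-processing step only after the parameters have been fixed.
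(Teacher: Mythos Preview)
Your argument is correct and essentially identical to the paper's own proof: both unfold the EQR definition to obtain the requirement $a<1-\widetilde{\varepsilon_1}<1$ for all $a<1$, deduce $\widetilde{\varepsilon_1}>0$, and then reach a contradiction via the witness $a_0=1-\widetilde{\varepsilon_1}/2$.
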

\begin{proof}
  Assume for a contradiction that $\widetilde{\varepsilon_1}\in\R$ is a suitable
  choice. Then by definition of extended quantifier elimination it follows for
  all $a\in\mathopen]-\infty,1\mathclose[$ that $a < 1 -
  \widetilde{\varepsilon_1} < 1$, in particular $\widetilde{\varepsilon_1} > 0$.
  On the other hand, for $a_0 = 1 - \frac{\widetilde{\varepsilon_1}}{2}$ we have
  $a_0\in\mathopen]-\infty,1\mathclose[$ and $1 - \widetilde{\varepsilon_1} <
  a_0 < 1$, a contradiction.
\end{proof}

\section{Elimination of Nonstandard Symbols\\ from Answers}\label{SE:main}
Given an extended quantifier elimination result and prescribed values for all
parameters, our goal is to compute answers containing only standard real
numbers. For instance, given (\ref{EQ:epsexin}) and (\ref{EQ:epsex}), and fixing
$a=-2$ we are going to obtain
\begin{displaymath}
  \addtolength{\arraycolsep}{0.25em}
  \left[
    \begin{array}{l|lll}
      \true &
      \displaystyle
      y = -\frac{9}{256} &
      \displaystyle
      x = -\frac{1}{32}
    \end{array}
  \right].
\end{displaymath}

From the point-of-view of our method, it makes no difference whether the
parameters are fixed after extended quantifier elimination or in advance. For
the sake of a concise description, we are thus going to restrict to existential
decision problems from now on. Recall that if the regular quantifier elimination
result is ``$\false$,'' then the extended quantifier elimination result is
$\emptyans$, i.e., empty. If the result is ``$\true$,'' then we assume for
simplicity that the extended quantifier elimination result contains only one
row, like
\begin{equation}
  \label{EQ:backanswer}
  \addtolength{\arraycolsep}{0.25em}
  \left[
    \begin{array}{l|lll}
      \true &
      \displaystyle
      x_1 = e_{1,1} &
      \displaystyle
      \dots &
      \displaystyle
      x_n = e_{1,n}
    \end{array}
  \right].
\end{equation}
Recall from (\ref{EQ:backsub}) in Section~\ref{SE:VSweak} that here the
$e_{1,1}$, \dots,~$e_{1,n}$ have been obtained from elimination terms $e_1$,
\dots, $e_{n}$ via regular back-substitution. Our method is going to use not the
back-substituted answers but these original elimination terms. In addition, we
are going to use the substitution guards $\gamma_1$, \dots,~$\gamma_n$
substituted with those elimination terms in (\ref{EQ:betadef}). Hence the input
for our method is not an EQR like in~(\ref{EQ:backanswer}) but a \emph{pre-EQR}
as follows:
\begin{displaymath}
  \label{EQ:nobackanswer}
  \addtolength{\arraycolsep}{0.25em}
  \left[
    \begin{array}{l|rrrr}
      \true &
      x_1 = e_{1}(x_2,\dots,x_n) &
      \displaystyle
      \dots &
      x_{n-1} = e_{n-1}(x_n) &
      x_n = e_{n}()\\
      &\gamma_1(x_2,\dots,x_n)&
      \dots&
      \gamma_{n-1}(x_n)&
      \gamma_n()
    \end{array}
  \right].
\end{displaymath}

\begin{lemma}[Semantics of Virtual Substitution]\label{LE:eps}
  Let $\varphi(x_1,\dots,x_n)$ be a Tarski formula, and let $a_2$,
  \dots,~$a_n\in\R$.
  \begin{enumerate}[(i)]
  \item Assume that $\R\models\varphi[x_1\msub x_2-\varepsilon](a_2,\dots,a_n)$.
    Then there is $\widetilde{\varepsilon_0}\in\R$,
    $0<\widetilde{\varepsilon_0}$, such that for all
    $\widetilde{\varepsilon}\in\R$,
    $0<\widetilde{\varepsilon}<\widetilde{\varepsilon_0}$, we have
    $\R\models\varphi(a_2-\widetilde{\varepsilon}, a_2,\dots,a_n)$.
  \item Assume that $\R\models\varphi[x_1\msub\infty](a_2,\dots,a_n)$. Then
    there is $\widetilde{a_0}\in\R$, such that for all $\widetilde{a_1}\in\R$,
    $\widetilde{a_0}<\widetilde{a_1}$, we have
    $\R\models\varphi(\widetilde{a_1}, a_2,\dots,a_n)$. In particular, the set
    $T=\{\,a\in\R\mid \text{$0<a$ and $\R\models\varphi(a,a_2,\dots,a_n)$}\,\}$
    is unbounded from above.
  \end{enumerate}
\end{lemma}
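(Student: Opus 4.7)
My approach passes to a real-closed field extension $\R^*\supsetneq\R$ containing a positive infinitesimal $\varepsilon$ and its inverse $\omega=1/\varepsilon$, and exploits the model-completeness of the theory of real-closed fields, which gives $\R\preceq\R^*$. The correctness of Weispfenning's virtual substitution rules for $\varepsilon$ and $\infty$, already cited in the paper, asserts that for every Tarski formula $\chi$ and every $b_2,\dots,b_n\in\R^*$,
\begin{displaymath}
  \R^*\models\chi[x_1\msub x_2-\varepsilon](b_2,\dots,b_n)\iff\R^*\models\chi(b_2-\varepsilon,b_2,\dots,b_n),
\end{displaymath}
and analogously with $\infty$ replaced by $\omega$. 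Since $\varphi[x_1\msub x_2-\varepsilon]$ and $\varphi[x_1\msub\infty]$ are standard $L$-formulas, applying $\R\preceq\R^*$ to the hypothesis of each part converts it into $\R^*\models\varphi(a_2-\varepsilon,a_2,\dots,a_n)$ for~(i) and into $\R^*\models\varphi(\omega,a_2,\dots,a_n)$ for~(ii).

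Next I would study the $L$-definable set $S=\{\,c\in\R^*\mid\R^*\models\varphi(c,a_2,\dots,a_n)\,\}$. Because $S$ is defined with parameters from $\R$ only, Tarski quantifier elimination writes $S$ as a finite union of intervals whose endpoints lie in $\R\cup\{\pm\infty\}$. For~(i), the point $a_2-\varepsilon$ lies in some constituent interval $I$; since $a_2-\varepsilon$ is infinitely close to $a_2$, the standard point $a_2$ lies in the closure $\overline{I}$, and the case $a_2=\inf I$ is ruled out because it would force $a_2-\varepsilon<\inf I$, hence $a_2-\varepsilon\notin I$. In every remaining case some real left neighborhood $(a_2-\widetilde{\varepsilon_0},a_2)$ is contained in $I\subseteq S$, and transferring back along $\R\preceq\R^*$ yields the desired threshold. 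For~(ii), because $\omega$ exceeds every element of $\R$, the interval containing $\omega$ must be unbounded above, of the form $(l,\infty)$ or $[l,\infty)$ with $l\in\R\cup\{-\infty\}$; any real $\widetilde{a_0}\geq l$ (or arbitrary if $l=-\infty$) works, and unboundedness of $T$ from above is then immediate.

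\textbf{Obstacle.} The genuinely nontrivial content is the cited correctness of Weispfenning's virtual substitution rules for $\varepsilon$ and $\infty$. This is a structural induction on $\chi$ whose atomic case reduces to a Taylor expansion argument --- for small $\widetilde{\varepsilon}>0$ the sign of $p(a_2-\widetilde{\varepsilon},\dots)$ is determined by the first nonvanishing derivative of $p$ in $x_1$ at $a_2$ --- and, respectively, to examining the leading coefficient of $p$ in $x_1$. I would cite \cite{Weispfenning:97b} for this rather than redo it; the remainder of the argument is standard model-theoretic bookkeeping about semialgebraic sets over a real-closed field.
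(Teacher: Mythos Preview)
Your argument is correct and shares the paper's overall architecture: pass to a real closed extension $\R^*$ containing an infinitesimal, invoke \cite{Weispfenning:97b} for the semantics of the virtual substitution, and finish with the interval structure of a one-variable semialgebraic set. The organisation of the endgame, however, differs. The paper transfers back to $\R$ \emph{before} doing any interval analysis: for each $n\in\N\setminus\{0\}$ it packages $0<x_0<\tfrac1n$ together with $\varphi[x_1/x_2-x_0]$ into a sentence, pushes $\exists x_0$ through $\R^*\equiv\R$, and thereby shows that the set $S=\{a\in\R\mid 0<a,\ \R\models\varphi(a_2-a,\dots)\}$ has infimum $0$; only then does it use that $S\subseteq\R$ is semialgebraic to extract an interval $(0,\widetilde{\varepsilon_0})\subseteq S$. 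You instead analyse the semialgebraic set directly in $\R^*$, exploiting the observation that, because the defining parameters $a_2,\dots,a_n$ lie in the real closed subfield $\R$, the interval endpoints already lie in $\R\cup\{\pm\infty\}$; this lets you read off a real left neighbourhood of $a_2$ immediately from the interval containing $a_2-\varepsilon$, and only afterwards transfer the individual real points back via $\R\preceq\R^*$. Your route is a little more direct (it bypasses the ``for every $n$'' density step), while the paper's route keeps all the semialgebraic reasoning inside $\R$; both are standard and equally rigorous.
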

\begin{proof}
  Consider $L_{\varepsilon}=L\cup\{\varepsilon,\infty\}$. Using the compactness
  theorem for first-order logic, there is a real closed field $\R^*$ where the
  interpretation of $\varepsilon$ is a positive infinitesimal and
  $\infty=\varepsilon^{-1}$. The $L$-restriction of $\R^*$ is a proper
  extension field of $\R$ and, by the Tarski principle, elementary equivalent
  to $\R$. Formally, $\R^*|L\supset\R$ and $\R^*|L\cong\R$.
  \begin{enumerate}[(i)]
  \item Using $\R\cong\R^*|L$ and expanding to $\R^*$ it follows from
    $\R\models\varphi[x_1\msub x_2-\varepsilon](a_2,\dots,a_n)$ that also
    $\R^*\models\varphi[x_1\msub x_2-\varepsilon](a_2,\dots,a_n)$. It has been
    discussed by \cite{LoosWeispfenning:93a} and \cite{Weispfenning:97b} that
    for our $a_2$, \dots,~$a_n\in\R$ the virtual substitution of $\varepsilon$
    has the following property:
    \begin{displaymath}
      \R^*\models\varphi[x_1\msub x_2-\varepsilon](a_2,\dots,a_n)
      \longleftrightarrow
      \varphi[x_1/x_2-\varepsilon](a_2,\dots,a_n).
    \end{displaymath}
    It follows that $\R^*\models\varphi[x_1/x_2-\varepsilon](a_2,\dots,a_n)$. Let
    $n\in\N\setminus\{0\}$. Then we can conclude
    $\R^*\models\psi[x_0/\varepsilon](a_2,\dots,a_n)$, where
    \begin{displaymath}
      \psi=(0<x_0\land nx_0<1\land\varphi)[x_1/x_2-x_0].
    \end{displaymath}
    Now we can generalize $\R^*\models\exists x_0\psi(a_2,\dots,a_n)$. Since
    $\varepsilon$ does not occur anymore, we restrict from $\R^*$ to $\R^*|L$ and
    then use the elementary equivalence to obtain $\R\models\exists
    x_0\psi(a_2,\dots,a_n)$. We have just shown that for any
    $n\in\N\setminus\{0\}$ there exists
    $a_0\in\R$, $0 < a_0 < \frac{1}{n}$, such that $\R\models\varphi(a_2-a_0,
    a_2,\dots,a_n)$. It follows that $\inf S=0$, where
    \begin{displaymath}
      S=\{\,a\in\R\mid \text{$0<a$ and
        $\R\models\varphi(a_2-a,a_2,\dots,a_n)$}\,\}\subseteq \R.
    \end{displaymath}
    On the other hand, $S$ is a semialgebraic set and thus a finite union of
    intervals and points. Hence there is $\widetilde{\varepsilon_0}\in\R$,
    $0<\widetilde{\varepsilon_0}$, such that
    $\mathopen]0,\widetilde{\varepsilon_0}\mathclose[\subseteq S$.
  \item The argument is essentially the same as in (i) above: We conclude that
    $\R^*\models\varphi[x_1\msub\infty](a_2,\dots,a_n)$. According to
    \cite{LoosWeispfenning:93a} and \cite{Weispfenning:97b} we know that
    \begin{displaymath}
      \R^*\models\varphi[x_1\msub\infty](a_2,\dots,a_n)
      \longleftrightarrow
      \varphi[x_1/\infty](a_2,\dots,a_n)
    \end{displaymath}
    so that for $n\in\N$ we can conclude
    $\R^*\models\psi[x_0/\infty](a_2,\dots,a_n)$, where
    \begin{displaymath}
      \psi=(n<x_0\land\varphi)[x_1/x_0].
    \end{displaymath}
    Again, we generalize, restrict, and apply
    elementary equivalence to obtain $\R\models\exists
    x_0\psi(a_2,\dots,a_n)$.
    We thus know that for any $n\in\N$ there exists
    $a_0\in\R$, $n < a_0$, such that $\R\models\varphi(a_0, a_2,\dots,a_n)$. It
    follows that the set $T$ is unbounded from above. On the other hand, $T$
    is a semialgebraic set and
    thus a finite union of intervals and points. Hence there is
    $\widetilde{a_0}\in\R$, such that
    $\mathopen]\widetilde{a_0},\infty\mathclose[\subseteq T$.\qedhere
  \end{enumerate}
\end{proof}

\begin{lemma}\label{LE:zwei}
  Consider a quantifier-free Tarski formula $\psi(x_1,\dots,x_n)$. Assume that
  for each $i\in\{2,\dots,n\}$ we have a root expression
  $\widetilde{e_i}=\frac{a_i + b_i\sqrt{c_i}}{d_i}$ with $a_i$, $b_i$, $c_i$,
  $d_i\in\Z[x_{i+1},\dots,x_n]$. Assume furthermore that
  $\R\models\psi\longrightarrow c_i\geq0\land d_i\neq0$, and let $\alpha_i\in\R$
  be the interpretation of $e_i =
  \widetilde{e_i}[x_{i+1}/\widetilde{e_{i+1}}]\dots[x_n/\widetilde{e_n}]$. Then
  \begin{displaymath}
    \{\,\alpha\in\R\mid\R\models\psi[x_2\msub\widetilde{e_2}]\dots[x_n\msub\widetilde{e_n}](\alpha)\,\}=
    \{\,\alpha\in\R\mid\R\models\psi(\alpha,\alpha_2,\dots,\alpha_n)\,\}.
  \end{displaymath}
\end{lemma}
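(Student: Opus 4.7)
The plan is to reduce the claim to the one-step correctness of virtual substitution of a single root expression, and then to iterate this via a staged induction on how many substitutions have already been performed. The key ingredient, illustrated in Section~\ref{SE:VSweak} by the equation case $f=0$ and established for every atomic Tarski relation by \cite{Weispfenning:97b}, is the following: for any quantifier-free Tarski formula $\chi(x,y_1,\dots,y_k)$, any root expression $\widetilde e(\bar y)=(a+b\sqrt c)/d$ with $a,b,c,d\in\Z[\bar y]$, and any $\bar\beta\in\R^k$ with $c(\bar\beta)\geq 0$ and $d(\bar\beta)\neq 0$,
\begin{displaymath}
  \R\models\chi[x\msub\widetilde e](\bar\beta)\longleftrightarrow\R\models\chi(\widetilde e(\bar\beta),\bar\beta).
\end{displaymath}
The result lifts from atomic to quantifier-free $\chi$ because virtual substitution commutes with Boolean connectives.

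Before iterating I would verify that the guards required by the key fact are indeed available at all points that will arise. Working from $i=n$ downward, the value $\alpha_i\in\R$ is by construction the evaluation of $\widetilde{e_i}$ after $x_{i+1},\dots,x_n$ have been replaced by $\alpha_{i+1},\dots,\alpha_n$; since $\alpha_i$ is assumed real, one necessarily has $c_i(\alpha_{i+1},\dots,\alpha_n)\geq 0$, $d_i(\alpha_{i+1},\dots,\alpha_n)\neq 0$, and $\alpha_i=\widetilde{e_i}(\alpha_{i+1},\dots,\alpha_n)$. The global hypothesis $\R\models\psi\to c_i\geq 0\land d_i\neq 0$ plays the complementary role of preventing rogue tuples from witnessing either side of the claimed set equality outside the well-defined region.

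The core of the proof is then an induction on $j$. Fix $\alpha\in\R$ and, for $j=1,\dots,n$, set $\Phi_j=\psi[x_2\msub\widetilde{e_2}]\dots[x_j\msub\widetilde{e_j}]$, a quantifier-free formula in the free variables $x_1,x_{j+1},\dots,x_n$. I will show that
\begin{displaymath}
  \R\models\Phi_j(\alpha,\alpha_{j+1},\dots,\alpha_n)\longleftrightarrow\R\models\psi(\alpha,\alpha_2,\dots,\alpha_n).
\end{displaymath}
The base $j=1$ is trivial since $\Phi_1=\psi$. For the step $j\to j+1$, I apply the key fact to $\chi=\Phi_j$ with distinguished variable $x_{j+1}$ and remaining parameters $(x_1,x_{j+2},\dots,x_n)$; the required guards at $(\alpha_{j+2},\dots,\alpha_n)$ were secured in the previous paragraph, so
\begin{displaymath}
  \R\models\Phi_{j+1}(\alpha,\alpha_{j+2},\dots,\alpha_n)\longleftrightarrow\R\models\Phi_j\bigl(\alpha,\widetilde{e_{j+1}}(\alpha_{j+2},\dots,\alpha_n),\alpha_{j+2},\dots,\alpha_n\bigr),
\end{displaymath}
and $\widetilde{e_{j+1}}(\alpha_{j+2},\dots,\alpha_n)=\alpha_{j+1}$ converts the right-hand side into the inductive hypothesis. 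Specializing to $j=n$ gives the lemma. The main obstacle in this plan is the one-step fact itself, which silently requires a relation-by-relation verification of the algebraic normalizations that put $\chi[x\msub\widetilde e]$ into an $L$-formula whose truth matches $\chi(\widetilde e(\bar\beta),\bar\beta)$; I would rely on \cite{Weispfenning:97b} for this rather than redo the case analysis.
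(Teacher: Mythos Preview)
Your proposal is correct and follows essentially the same approach as the paper: both hinge on the one-step correctness of virtual substitution of a root expression due to \cite{Weispfenning:97b} and then iterate it across $x_2,\dots,x_n$. The only cosmetic difference is that the paper packages the iteration via an extension language $L'=L\cup\{\sqrt{\ },{}^{-1}\}$ and the expansion $\R'$ (so that the intermediate regularly substituted terms are syntactically well-formed) and then collapses the chain of equivalences into a single line, whereas you stay semantic and spell the induction out explicitly.
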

\begin{proof}
  Consider $L'=L\cup\{\sqrt{\ }, {}^{-1}\}$ and the $L'$-expansion $\R'$ of $\R$
  where $\sqrt{\ }$ and ${}^{-1}$ have the usual semantics if defined and an
  arbitrary but fixed value otherwise. Let $\nu = (f\mathrel{\varrho} 0)$, where
  $f\in\Z[x,u_1,\dots,u_m]$, $\varrho\in\{\leq,<,\geq,>,\neq,=\}$, and let
  $e=\frac{a+b\sqrt{c}}{d}$, where $a$, $b$, $c$, $d\in\Z[u_1,\dots,u_m]$. Let
  $\sss\in\R^m$ such that $c(\sss)\geqslant 0$ and $d(\sss)\neq 0$. Then
  according to \cite{Weispfenning:97b} the following holds for $\sss\in\R^m$:
  \begin{displaymath}
    \R'\models(\nu[x\msub e]
    \longleftrightarrow
    \nu[x/e])(\sss).
  \end{displaymath}
  For our formula $\psi$ and all $\alpha\in\R$ we obtain
  \begin{equation*}
    \R'\models(\psi[x_2\msub\widetilde{e_2}]\dots[x_n\msub\widetilde{e_n}]
    \longleftrightarrow\psi[x_2/{\widetilde{e_2}}]\dots[x_n/{\widetilde{e_n}}]
    \longleftrightarrow\psi
    )(\alpha,\alpha_2,\dots,\alpha_n).
  \end{equation*}
  It follows that
  $\R\models(\psi[x_2\msub\widetilde{e_2}]\dots[x_n\msub\widetilde{e_n}]
  \longleftrightarrow\psi )(\alpha,\alpha_2,\dots,\alpha_n)$ for all
  ${\alpha\in\R}$.
\end{proof}

\begin{lemma}[Commutation of Virtual Substitutions]\label{LE:comm}
  Consider a quantifier-free Tarski formula $\psi(x_1,\dots,x_n)$. Let $e_1 =
  \frac{a_1+b_1\sqrt{c_1}}{d_1}$, with $a_1$, $b_1$, $c_1$, $d_1\in\Z$,
  $c_1\geqslant 0$, $d_1\neq 0$. Furthermore, let $i\in\{2,\dots,n\}$, let $e_i
  =\frac{a_i+b_i\sqrt{c_i}}{d_i}$ with $a_i$, $b_i$, $c_i$,
  $d_i\in\Z[x_{i+1},\dots,x_n]$, and let $\gamma_i(x_{i+1},\dots,x_n)$ be a
  quantifier-free formula such that $\R\models\gamma_i\longrightarrow
  c_i\geqslant 0 \land d_i\neq 0$. Then
  \begin{displaymath}
    \R\models(\gamma_i\land\psi)[x_i\msub e_i][x_1\msub e_1]
    \longleftrightarrow
    (\gamma_i\land\psi)[x_1\msub e_1][x_i\msub e_i].
  \end{displaymath}
\end{lemma}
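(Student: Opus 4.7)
My plan is to reduce both virtual substitutions to ordinary term substitutions in the extension language $L' = L \cup \{\sqrt{\ }, {}^{-1}\}$ from the proof of Lemma~\ref{LE:zwei}, and then observe that those ordinary substitutions commute because neither substituted variable appears in the other elimination term. Concretely, $x_i$ does not occur in $e_1$ since $a_1, b_1, c_1, d_1 \in \Z$, and $x_1$ does not occur in $e_i$ since $a_i, b_i, c_i, d_i \in \Z[x_{i+1},\dots,x_n]$ and $i \geq 2$; moreover, neither $x_1$ nor $x_i$ occurs in $\gamma_i(x_{i+1},\dots,x_n)$. Thus the only nontrivial interaction between the two substitutions lives inside the atoms of $\psi$.

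First I would peel off $\gamma_i$: since $x_1$ and $x_i$ do not occur in $\gamma_i$, virtual substitution leaves it untouched, so both sides reduce to $\gamma_i\land\psi'$ where $\psi'$ is $\psi$ after the two virtual substitutions in the respective order. Second, I would apply the Weispfenning equivalence quoted in the proof of Lemma~\ref{LE:zwei}: for an atomic constraint $\nu$ and a root expression $e = (a+b\sqrt{c})/d$, one has $\R' \models (\nu[x \msub e] \leftrightarrow \nu[x/e])(\sss)$ whenever $c(\sss) \geq 0$ and $d(\sss) \neq 0$. Under the assumption that $\gamma_i$ holds (giving $c_i \geq 0 \land d_i \neq 0$) and because $c_1 \geq 0$, $d_1 \neq 0$ are absolute, an induction on the formula structure lifts this equivalence from atoms to all of $\psi$. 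Therefore each of the two virtual substitution chains is $\R'$-equivalent, under $\gamma_i$, to the corresponding iterated ordinary substitution: $(\gamma_i\land\psi)[x_i\msub e_i][x_1\msub e_1] \leftrightarrow \gamma_i\land\psi[x_i/e_i][x_1/e_1]$ and symmetrically for the other order. Since $x_i$ does not occur in $e_1$ and $x_1$ does not occur in $e_i$, the two ordinary substitutions commute and produce literally the same $L'$-formula. Restricting from $\R'$ to $\R$ then yields the desired equivalence.

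The main obstacle I anticipate is bookkeeping for the inductive lift of the Weispfenning atom-level equivalence to iterated virtual substitutions. After the outer virtual substitution, the inner virtual substitution is applied to a quantifier-free Tarski formula whose atoms are the outputs of the outer substitution, so I must check that the side conditions $c_i \geq 0 \land d_i \neq 0$ and $c_1 \geq 0 \land d_1 \neq 0$ remain available at the second stage. For $e_1$ they are absolute, and for $e_i$ they are provided by the conjunct $\gamma_i$ that was carried along precisely for this purpose; this is why the statement of the lemma bundles $\gamma_i$ with $\psi$. Once this is in place, everything else is routine propagation through Boolean connectives and the obvious commutativity of ordinary substitutions with disjoint variable sets.
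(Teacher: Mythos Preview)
Your proposal is correct and follows essentially the same route as the paper's proof: pass to the expansion $\R'$ of $\R$ over $L' = L \cup \{\sqrt{\ },{}^{-1}\}$, use the Weispfenning equivalence to replace virtual substitution by ordinary term substitution whenever the guard conditions $c\geq 0$, $d\neq 0$ hold, commute the two ordinary substitutions (which is immediate since $x_i$ does not occur in $e_1$ and $x_1$ does not occur in $e_i$), and then restrict back to $\R$. The paper's proof differs only in presentation: rather than peeling off $\gamma_i$ and arguing under that hypothesis, it notes directly that at points where $c_i<0$ or $d_i=0$ the conjunction $\gamma_i\land\psi$ is already equivalent to $\false$, so both sides trivially agree there; elsewhere the Weispfenning equivalence applies. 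Your factoring as $\gamma_i\land\psi'$ achieves the same case split implicitly.
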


\begin{proof}
  Let $L'$, $\R'$ be as in the proof of Lemma~\ref{LE:zwei}. Recall the
  equivalence $\R'\models(\nu[x\msub e] \longleftrightarrow \nu[x/e])(\sss)$ and
  observe that, on the other hand, if $\R\models(c_i<0)(\sss)$ or $\R\models(d_i
  = 0)(\sss)$ for $\sss\in\R^{n-i}$, then
  $\R\models(\gamma_i\land\psi\longleftrightarrow\false)(\sss)$. It follows
  that
  \begin{eqnarray*}
    \R' \models (\gamma_i\land\psi)[x_i\msub e_i][x_1\msub e_1]
    & \longleftrightarrow &
    (\gamma_i\land\psi)[x_i/e_i][x_1/e_1]\\
    & \longleftrightarrow &
    (\gamma_i\land\psi)[x_1/e_1][x_i/e_i]\\
    & \longleftrightarrow &
    (\gamma_i\land\psi)[x_1\msub e_1][x_i\msub e_i].
  \end{eqnarray*}
  Since virtual substitution eliminates all occurrences of $\sqrt{\ }$ and
  ${}^{-1}$, we can conclude that
  $\R\models (\gamma_i\land\psi)[x_i\msub e_i][x_1\msub e_1]
    \longleftrightarrow
    (\gamma_i\land\psi)[x_1\msub e_1][x_i\msub e_i]$.
\end{proof}

\begin{theorem}[Computation of Standard Answers]\label{TH:main}
  Consider a closed Tarski formula $\varphi = \exists x_n\dots\exists
  x_1\psi(x_1,\dots,x_n)$. Assume that
  \begin{displaymath}
    \addtolength{\arraycolsep}{0.25em}
    \left[
      \begin{array}{l|rrr}
        \true &
        x_1 = e_1 &
        \displaystyle
        \dots &
        x_n = e_n\\
        &
        \gamma_1 &
        \dots &
        \gamma_n
      \end{array}
    \right]
  \end{displaymath}
  is a pre-EQR for $\varphi$ such that each $e_i$ is of one of the following
  forms:
  \begin{enumerate}[(a)]
  \item a root expression $\frac{a + b\sqrt{c}}{d}$, where $a$, $b$, $c$,
    $d\in\Z[x_{i+1},\dots,x_n]$,
  \item $\infty$,
  \item $x_{i+1} - \varepsilon$.
  \end{enumerate}
  Then we can compute root expressions $\widetilde{e_1}$,
  \dots,~$\widetilde{e_n}$ meeting the specification (a) above and
  $\widetilde{\gamma_1}$, \dots,~$\widetilde{\gamma_n}$ such that the following
  is a pre-EQR for $\varphi$ as well:
  \begin{equation}\label{EQ:corrpreans}
    \addtolength{\arraycolsep}{0.25em}
    \left[
      \begin{array}{l|rrr}
        \true &
        \displaystyle
        x_1 = \widetilde{e_1} &
        \displaystyle
        \dots &
        \displaystyle
        x_n = \widetilde{e_n}\\[0.5ex]
        &
        \widetilde{\gamma_1} &
        \dots &
        \widetilde{\gamma_n}
      \end{array}
    \right].
  \end{equation}
\end{theorem}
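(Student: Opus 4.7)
The plan is to proceed by induction on $i$ from $n$ down to $1$, at each step choosing a specific real value $\widetilde{\alpha_i}\in\R$ together with a root expression $\widetilde{e_i}$ whose interpretation at the previously fixed values $x_{i+1}=\widetilde{\alpha_{i+1}},\dots,x_n=\widetilde{\alpha_n}$ is exactly $\widetilde{\alpha_i}$. For bookkeeping, I introduce auxiliary Tarski formulas $\chi_0\me\psi$ and $\chi_k\me(\gamma_k\land\chi_{k-1})[x_k\msub e_k]$ computed from the \emph{original} data of the input pre-EQR; each $\chi_k$ is an honest $L$-formula in $x_{k+1},\dots,x_n$ because virtual substitution of $\varepsilon$ and $\infty$ eliminates these symbols syntactically. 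The hypothesis that the input is a pre-EQR for $\varphi$ then reads $\R\models\chi_n$.

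The invariant I maintain is: entering step $i$, the tuple $(\widetilde{\alpha_{i+1}},\dots,\widetilde{\alpha_n})$ chosen so far satisfies $\R\models\chi_i(\widetilde{\alpha_{i+1}},\dots,\widetilde{\alpha_n})$. This is immediate at $i=n$. At step $i$ I unfold $\chi_i=(\gamma_i\land\chi_{i-1})[x_i\msub e_i]$ and split on the form of $e_i$. In case~(a), where $e_i$ is already a root expression, Lemma~\ref{LE:zwei} (with the guard $\gamma_i$ providing the required conditions) lets me replace this single virtual substitution by evaluation at the interpreted value $\widetilde{\alpha_i}\me e_i(\widetilde{\alpha_{i+1}},\dots,\widetilde{\alpha_n})$, and I set $\widetilde{e_i}\me e_i$, $\widetilde{\gamma_i}\me\gamma_i$. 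In case~(b), $e_i=\infty$, Lemma~\ref{LE:eps}(ii) applied to $\gamma_i\land\chi_{i-1}$ yields a threshold $\widetilde{a_0}\in\R$ above which every real is a witness; I pick an integer $\widetilde{\alpha_i}>\widetilde{a_0}$ and set $\widetilde{e_i}\me\widetilde{\alpha_i}$, $\widetilde{\gamma_i}\me\true$. In case~(c), $e_i=x_{i+1}-\varepsilon$ (which only arises for $i<n$), Lemma~\ref{LE:eps}(i) produces a threshold $\widetilde{\varepsilon_0}>0$ below which every positive real works; I pick a rational $\widetilde{\varepsilon}=p/q\in\mathopen]0,\widetilde{\varepsilon_0}\mathclose[$ and set $\widetilde{e_i}\me\frac{qx_{i+1}-p}{q}$, $\widetilde{\gamma_i}\me\true$, $\widetilde{\alpha_i}\me\widetilde{\alpha_{i+1}}-\widetilde{\varepsilon}$. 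In every case the new tuple satisfies $\R\models(\gamma_i\land\chi_{i-1})(\widetilde{\alpha_i},\widetilde{\alpha_{i+1}},\dots,\widetilde{\alpha_n})$; in particular $\R\models\chi_{i-1}(\widetilde{\alpha_i},\widetilde{\alpha_{i+1}},\dots,\widetilde{\alpha_n})$, so the invariant carries over to step $i-1$.

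When $i$ reaches $0$ I have $\R\models\psi(\widetilde{\alpha_1},\dots,\widetilde{\alpha_n})$. It remains to verify that the scheme~(\ref{EQ:corrpreans}) really is a valid pre-EQR. Defining $\widetilde{\chi_0}\me\psi$ and $\widetilde{\chi_k}\me(\widetilde{\gamma_k}\land\widetilde{\chi_{k-1}})[x_k\msub\widetilde{e_k}]$, a short induction on $k$ using Lemma~\ref{LE:zwei} shows $\R\models\widetilde{\chi_k}(\widetilde{\alpha_{k+1}},\dots,\widetilde{\alpha_n})\longleftrightarrow\widetilde{\gamma_k}(\widetilde{\alpha_{k+1}},\dots,\widetilde{\alpha_n})\land\widetilde{\chi_{k-1}}(\widetilde{\alpha_k},\dots,\widetilde{\alpha_n})$; both conjuncts are true by construction, so $\widetilde{\chi_n}$ is true, as required.

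The main obstacle is the constructive content of cases~(b) and~(c): I must effectively produce an integer beyond $\widetilde{a_0}$, respectively a rational inside $\mathopen]0,\widetilde{\varepsilon_0}\mathclose[$. Lemma~\ref{LE:eps} only guarantees existence, but the relevant witness sets are semialgebraic subsets of $\R$, unbounded from above in~(b) and containing a positive neighborhood of~$0$ in~(c). A routine procedure enumerates $N=1,2,\dots$ and tests $\R\models(\gamma_i\land\chi_{i-1})(N,\widetilde{\alpha_{i+1}},\dots)$, respectively $\R\models(\gamma_i\land\chi_{i-1})(\widetilde{\alpha_{i+1}}-\tfrac{1}{N},\widetilde{\alpha_{i+1}},\dots)$, by standard real decision; termination is exactly the content of Lemma~\ref{LE:eps}. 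Commutation issues between the various virtual substitutions are absorbed entirely into Lemma~\ref{LE:zwei} through the trick of evaluating at the already-fixed reals $\widetilde{\alpha_{i+1}},\dots,\widetilde{\alpha_n}$, so Lemma~\ref{LE:comm} plays no explicit role in this proof.
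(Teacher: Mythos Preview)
Your proof is correct and takes a genuinely different route from the paper. Both arguments descend from $n$ to $1$ and split on the form of $e_i$, but the paper interleaves the construction with the pre-EQR verification: at each step it works with the formula $\xi'=\xi[x_{k+1}\msub\widetilde{e_{k+1}}]\dots[x_n\msub\widetilde{e_n}]$ obtained by virtually substituting the \emph{new} root expressions on the right, and in cases~(b) and~(c) it needs Lemma~\ref{LE:comm} $n-k$ times to commute the fresh substitution $[x_k\msub\widetilde{e_k}]$ past these. You instead keep the original $\chi_i$ throughout phase one, evaluate at the real points $\widetilde{\alpha_{i+1}},\dots,\widetilde{\alpha_n}$, and defer the pre-EQR check to a separate induction on $\widetilde{\chi_k}$; since every $\widetilde{e_k}$ is then a root expression and the guards hold at the evaluation point, the single-substitution semantics underlying Lemma~\ref{LE:zwei} suffices and Lemma~\ref{LE:comm} is never invoked. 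Two further differences: in case~(c) the paper refines the isolating interval of $\alpha_{k+1}$ and takes $\widetilde{e_k}$ to be the resulting \emph{constant} rational, whereas you set $\widetilde{e_i}=\tfrac{qx_{i+1}-p}{q}$, still a root expression of type~(a) but retaining a dependence on $x_{i+1}$; and for effectivity in (b) and (c) the paper uses root bounds and interval refinement on the univariate $\xi'$, while you enumerate and decide $\chi_{i-1}$ at real algebraic points. Your decomposition is arguably cleaner conceptually; the paper's version buys a more direct computational handle, since root bounds on a univariate Tarski formula are cheaper than repeated decision at algebraic numbers.
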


\begin{proof}
  For the sake of the proof, we are going to show that in addition to the
  required $\widetilde{e_1}$, \dots,~$\widetilde{e_n}$ and
  $\widetilde{\gamma_1}$, \dots,~$\widetilde{\gamma_n}$ we can compute real
  algebraic numbers $\alpha_1$, \dots,~$\alpha_n$ for the values of
  $\widetilde{e_1}$, \dots,~$\widetilde{e_n}$ after back-substitution. We
  represent these real algebraic numbers as pairs of univariate defining
  polynomials and open isolating intervals with rational bounds. Given
  $k\in\{1,\dots,n\}$, it suffices to show that from
  \begin{equation*}
    \addtolength{\arraycolsep}{0.2em}
    \left[
      \begin{array}{l|rrrrrr}
        \true &
        x_1 = e_1 &
        \dots &
        x_k = e_k &
        x_{k+1} = \widetilde{e_{k+1}} &
        \dots &
        x_n = \widetilde{e_n}\\[0.5ex]
        &
        \gamma_1 &
        \dots &
        \gamma_k &
        \widetilde{\gamma_{k+1}} &
        \dots &
        \widetilde{\gamma_n}
      \end{array}
    \right]
  \end{equation*}
  and $\alpha_{k+1}$, \dots,~$\alpha_n$ we can compute suitable
  $\widetilde{e_k}$, $\widetilde{\gamma_k}$, and $\alpha_k$. Define
  \begin{eqnarray*}
    \varphi_k(x_k,\dots,x_n)&=&(\gamma_{k-1}\land\dots\land\gamma_1\land\psi)[x_1\msub
    e_1]\dots[x_{k-1}\msub e_{k-1}],\\
    \varphi_{k+1}(x_{k+1},\dots,x_n)&=&(\gamma_k\land\varphi_k)[x_k\msub e_k],
  \end{eqnarray*}
  and observe that
  \begin{equation}\label{EQ:indpreans}
    \addtolength{\arraycolsep}{0.25em}
    \left[
      \begin{array}{l|rrrr}
        \true &
        x_k = e_k &
        x_{k+1} = \widetilde{e_{k+1}} &
        \dots &
        x_n = \widetilde{e_n}\\[0.5ex]
        &
        \gamma_k &
        \widetilde{\gamma_{k+1}} &
        \dots &
        \widetilde{\gamma_n}
      \end{array}
    \right]
  \end{equation}
  and
  \begin{equation}\label{EQ:indpreans2}
    \addtolength{\arraycolsep}{0.25em}
    \left[
      \begin{array}{l|rrr}
        \true &
        x_{k+1} = \widetilde{e_{k+1}} &
        \dots &
        x_n = \widetilde{e_n}\\[0.5ex]
        &
        \widetilde{\gamma_{k+1}} &
        \dots &
        \widetilde{\gamma_n}
      \end{array}
    \right]
  \end{equation}
  are pre-EQRs for $\exists x_n\dots\exists x_{k}\varphi_k$ and $\exists
  x_n\dots\exists x_{k+1}\varphi_{k+1}$, respectively. On the basis of these
  definitions it is sufficient for our proof to compute suitable
  $\widetilde{e_k}$, $\widetilde{\gamma_k}$, and $\alpha_k$ such that
  \begin{equation*}
    \addtolength{\arraycolsep}{0.25em}
    \left[
      \begin{array}{l|rrrr}
        \true &
        x_k = \widetilde{e_k} &
        x_{k+1} = \widetilde{e_{k+1}} &
        \dots &
        x_n = \widetilde{e_n}\\[0.5ex]
        &
        \widetilde{\gamma_k} &
        \widetilde{\gamma_{k+1}} &
        \dots &
        \widetilde{\gamma_n}
      \end{array}
    \right]
  \end{equation*}
  is a pre-EQR for $\exists x_n\dots\exists x_{k}\varphi_k$ as well. We
  define furthermore
  \begin{eqnarray*}
    \xi(x_k,\dots,x_n)&=&\widetilde{\gamma_n}\land\dots\land\widetilde{\gamma_{k+1}}\land\varphi_k,\\
    \xi'(x_k)&=&\xi[x_{k+1}\msub\widetilde{e_{k+1}}]\dots[x_n\msub\widetilde{e_n}].
  \end{eqnarray*}
  Lemma~\ref{LE:zwei} applied to the quantifier-free formula $\xi$, the root
  expressions $\widetilde{e_{k+1}}$, \dots,~$\widetilde{e_n}$, and the real
  algebraic numbers $\alpha_{k+1}$, \dots,~$\alpha_n$ yields
  \begin{equation}\label{EQ:equalsets}
    \{\,r\in\R\mid\R\models\xi'(r)\,\}
    =
    \{\,r\in\R\mid\R\models\xi(r,\alpha_{k+1},\dots,\alpha_n)\,\}.
  \end{equation}
  We distinguish three cases depending on the type of $e_k$.
  \begin{enumerate}[(a)]
  \item We have $e_k = \frac{a_k + b_k\sqrt{c_k}}{d_k}$, and $\gamma_k$ equals
    $d_k \neq 0\land c_k \geqslant 0$. We set $\widetilde{e_k}=e_k$ and
    $\widetilde{\gamma_k}=\gamma_k$. Since (\ref{EQ:indpreans2}) is a pre-EQR
    for $\exists x_n\dots\exists x_{k+1}\varphi_{k+1}$ and $\alpha_{k+1}$,
    \dots,~$\alpha_{n}$ correspond to the values of $\widetilde{e_{k+1}}$,
    \dots,~$\widetilde{e_{n}}$ after back-substitution, we have
    $\R\models\varphi_{k+1}(\alpha_{k+1},\dots,\alpha_n)$. It follows that in
    particular $\R\models\gamma_k(\alpha_{k+1},\dots,\alpha_n)$ and furthermore
    $\R\models d_k(\alpha_{k+1},\dots,\alpha_n)\neq 0$ and $\R\models
    c_k(\alpha_{k+1},\dots,\alpha_n)\geqslant 0$. This allows us to compute
    $\alpha_k=\widetilde{e_k}(\alpha_{k+1},\dots,\alpha_k)$ from $a_k$, $b_k$,
    $c_k$, $d_k$, $\alpha_{k+1}$, \dots,~$\alpha_n$.
  \item We have $e_k=\infty$, and $\gamma_k$ is ``$\true$.'' Since
    (\ref{EQ:indpreans}) is a pre-EQR for formula $\exists x_n\dots\exists
    x_k\varphi_k$ we have
    \begin{equation*}
      \R\models\xi[x_k\msub\infty][x_{k+1}\msub\widetilde{e_{k+1}}]\dots[x_n\msub\widetilde{e_n}].
    \end{equation*}
    Using the fact that $\alpha_{k+1}$, \dots,~$\alpha_n$ are real algebraic
    numbers corresponding to the values of $\widetilde{e_{k+1}}$,
    \dots,~$\widetilde{e_n}$ after back-substitution we conclude that
    \begin{equation*}
      \R\models\xi[x_k\msub\infty](\alpha_{k+1},\dots,\alpha_n).
    \end{equation*}
    Lemma~\ref{LE:eps}(ii) now guarantees that the set
    $\{\,r\in\R\mid\R\models\xi(r,\alpha_{k+1},\dots,\alpha_n)\,\}$ is unbounded
    from above. Thus by~(\ref{EQ:equalsets}) the set
    $\{\,r\in\R\mid\R\models\xi'(r)\,\}$ is unbounded from above as well. Using
    well-known bounds~\citep{Akritas:09a} on the roots of the univariate
    polynomials contained in $\xi'$, we compute a sufficiently large
    $\frac{p}{q}\in\Q$ satisfying $\xi'$. We set $\widetilde{e_k} = \frac{p +
      0\sqrt{0}}{q}$ and construct a corresponding real algebraic number
    $\alpha_k$, and we set $\widetilde{\gamma_k}=\true$. Then
    $\R\models(\widetilde{\gamma_k}\land\xi')[x_k\msub\widetilde{e_k}]$, and
    $n-k$ applications of Lemma~\ref{LE:comm} yield
    \begin{displaymath}
      \R\models
      (\widetilde{\gamma_n}\land\dots\land\widetilde{\gamma_k}\land\varphi_k)
      [x_k\msub\widetilde{e_k}][x_{k+1}\msub\widetilde{e_{k+1}}]\dots[x_n\msub\widetilde{e_n}].
    \end{displaymath}
  \item We have $e_k=x_{k+1}-\varepsilon$, and $\gamma_k$ is ``$\true$.''
    Similarly to case (b), we observe that
    (\ref{EQ:indpreans}) is a pre-EQR for $\exists
    x_n\dots\exists x_k\varphi_k$ and obtain
    \begin{equation*}
      \R\models\xi[x_k\msub x_{k+1}-\varepsilon][x_{k+1}\msub\widetilde{e_{k+1}}]\dots[x_n\msub\widetilde{e_n}],
    \end{equation*}
    and conclude that $\R\models\xi[x_k\msub
    x_{k+1}-\varepsilon](\alpha_{k+1},\dots,\alpha_n)$. Lemma~\ref{LE:eps}(i)
    now guarantees the existence of some $\widetilde{\varepsilon_0}\in\R$,
    $0<\widetilde{\varepsilon_0}$, such that
    \begin{displaymath}
      \R\models\xi(\alpha_{k+1}-\widetilde{\varepsilon},\alpha_{k+1},\dots,\alpha_n)
      \quad\text{for}\quad0<\widetilde{\varepsilon}<\widetilde{\varepsilon_0}.
    \end{displaymath}
    By~(\ref{EQ:equalsets}) it follows that
    $\R\models\xi'(\alpha_{k+1}-\widetilde{\varepsilon})$ for all
    $0<\widetilde{\varepsilon}<\widetilde{\varepsilon_0}$. Therefore, after
    finitely many refinements of the isolating interval
    $\bigl]\frac{p}{q},u\bigr[$ of $\alpha_{k+1}$ we obtain
    $\R\models\xi'\bigl(\frac{p}{q}\bigr)$. We set
    $\widetilde{e_k}=\frac{p+0\sqrt{0}}{q}$ and construct a corresponding real
    algebraic number, and we set $\widetilde{\gamma_k}=\true$. Exactly as in
    case (b),
    $\R\models(\widetilde{\gamma_k}\land\xi')[x_k\msub\widetilde{e_k}]$, and
    $n-k$ applications of Lemma~\ref{LE:comm} yield
    \begin{displaymath}
      \R\models
      (\widetilde{\gamma_n}\land\dots\land\widetilde{\gamma_k}\land\varphi_k)
      [x_k\msub\widetilde{e_k}][x_{k+1}\msub\widetilde{e_{k+1}}]\dots[x_n\msub\widetilde{e_n}].
    \end{displaymath}
    Note that instead of using the lower bound $\frac{p}{q}$ one can
    heuristically try and find a satisfying integer.\qedhere
  \end{enumerate}
\end{proof}

A careful inspection of our proof reveals that in all cases
$\widetilde{\gamma_k}=\gamma_k$. However, this is going to change in
Corollary~\ref{CO:main}, which generalizes our theorem.

Notice that the constructive proof of Theorem~\ref{TH:main} suggests to
recompute the intermediate quantifier elimination results $\varphi_k$. In
practice, there are arguments for saving these $\varphi_k$ during the quantifier
elimination run. Consider, e.g., the following common optimization: Whenever
some $\varphi_k$ heuristically simplifies to a disjunction
$\varphi_{k,1}\lor\dots\lor\varphi_{k,s}$, then the virtual substitution
procedure would treat each $\varphi_{k,j}$ separately, i.e., like originating
from several elimination set elements. In general, in the course of the
application of Theorem~\ref{TH:main} such transformations cannot be
reconstructed exclusively from the pre-EQR.

To illustrate the theorem we revisit our example given in (\ref{EQ:epsexin}) on
page~\pageref{EQ:epsexin} for the choice $a=-2$. In that case $\psi$ in
(\ref{EQ:epsexin}) specializes to
\begin{equation*}
  \psi\me-2y + 3 x^{2} + 4 x < 0 \land x>y>-2.
\end{equation*}
For our theorem we have to consider the following pre-EQR corresponding to
the specialization of the EQR (\ref{EQ:epsex}) to $a=-2$:
\begin{equation}\label{EQ:expreans}
  \addtolength{\arraycolsep}{0.25em}
  \left[
    \begin{array}{l|rrrr}
      \true & y = x - \varepsilon & x = h-\varepsilon & h = 0\\
            & \true & \true                & \true
    \end{array}
  \right].
\end{equation}
Notice the introduction of an artificial variable $h$ to meet the requirement of
the theorem that infinitesimals occur only in expressions of the form $x_i =
x_{i+1}-\varepsilon$.

To apply the theorem to the pre-EQR (\ref{EQ:expreans}), we consider
\begin{eqnarray*}
  \varphi_1(y, x, h) & \me & -2y + 3 x^{2} + 4 x < 0 \land x>y>-2,\\
  \varphi_2(x, h) & \me & \varphi_1[y\msub x-\varepsilon]\\
  & \me & x + 2 > 0 \land 3 x^{2} + 2 x < 0,\\
  \varphi_3(h) & \me & \varphi_2[x\msub h-\varepsilon]\\
  & \me & h + 2 > 0 \land (h = 0 \lor 3 h^{2} + 2 h < 0).
\end{eqnarray*}
As in the theorem, we proceed from the right to the left, i.e., our first step
is fixing $h$ and computing a respective algebraic number $\alpha_h$. Since
$h=0$, we are in case (a). Now
\begin{equation*}
  \addtolength{\arraycolsep}{0.25em}
  \left[
    \begin{array}{l|r}
      \true & h = 0\\
            & \true
    \end{array}
  \right]
\end{equation*}
is a pre-EQR for $\exists h\varphi_3$. Therefore, $\alpha_h$ is the root of
the polynomial $h$ in the interval $\mathopen]-1, 1\mathclose[$, i.e.,
$\alpha_h$ is the rational number 0.

We continue with $x = h - \varepsilon$, which is case (c). Now
\begin{equation*}
  \addtolength{\arraycolsep}{0.25em}
  \left[
    \begin{array}{l|rrr}
      \true & x = h-\varepsilon & h = 0\\
            &  \true            & \true
    \end{array}
  \right]
\end{equation*}
is a pre-EQR for $\exists x\exists h\varphi_2$. Lemma~\ref{LE:eps}(i) ensures
that there exists $\widetilde{\varepsilon_0}\in\R$,
$0<\widetilde{\varepsilon_0}$ such that for all $\widetilde{\varepsilon}\in\R$,
$0<\widetilde{\varepsilon}<\widetilde{\varepsilon_0}$, we have
$\R\models\varphi_2(\alpha_h-\widetilde{\varepsilon}, \alpha_h)$. Refining
$\alpha_h$ we compute that $\R\models\varphi_2(\alpha_x, \alpha_h)$, where
$\alpha_x$ is the root of $32x+1$ in the interval $\mathopen]-\frac{1}{16},
\frac{1}{16}\mathclose[$, i.e., $\alpha_x$ is the rational number
$-\frac{1}{32}$.

Finally consider $y = x - \varepsilon$, which is again case (c). Now
\begin{equation*}
  \addtolength{\arraycolsep}{0.25em}
  \left[
    \begin{array}{l|rrrr}
      \true & y = x - \varepsilon  & x = -\frac{1}{32} & h = 0\\
            &  \true               & \true            & \true
    \end{array}
  \right]
\end{equation*}
is a pre-EQR for $\exists y\exists x\exists h\varphi_1$.
Lemma~\ref{LE:eps}(i) ensures that there exists
$\widetilde{\varepsilon_0}\in\R$, $0<\widetilde{\varepsilon_0}$ such that for
all $\widetilde{\varepsilon}\in\R$,
$0<\widetilde{\varepsilon}<\widetilde{\varepsilon_0}$, we have
$\R\models\varphi_1(\alpha_x-\widetilde{\varepsilon}, \alpha_x, \alpha_h)$.
Refining $\alpha_x$ we compute that $\R\models\varphi_1(\alpha_y, \alpha_x,
\alpha_h)$, where $\alpha_y$ is the root of $256y + 9$ in the interval
$\mathopen]-\frac{10}{256}, \frac{10}{256}\mathclose[$, i.e., $\alpha_y$ is the
rational number $-\frac{9}{256}$. To conclude we state that
\begin{equation*}
  \addtolength{\arraycolsep}{0.25em}
  \left[
    \begin{array}{l|rrrr}
      \true & y = -\frac{9}{256}   & x = -\frac{1}{32} & h = 0\\
            &  \true               & \true             & \true
    \end{array}
  \right]
\end{equation*}
is a pre-EQR for $\exists y\exists x\exists h\varphi_1$, which does not
contain any nonstandard symbols. Since $h$ does not occur in $\varphi_1=\psi$,
\begin{equation*}
  \addtolength{\arraycolsep}{0.25em}
  \left[
    \begin{array}{l|rrr}
      \true & y = -\frac{9}{256}   & x = -\frac{1}{32}\\
            &  \true               & \true
    \end{array}
  \right]
\end{equation*}
is a pre-EQR for $\exists y\exists x\psi$.

Finally, note that all quantified variables have to be present in a pre-EQR
before Theorem~\ref{TH:main} can be applied. This has a consequence, which we
illustrate on an example. Consider a valid sentence $\exists x\exists a(x < a)$.
An extended quantifier elimination result containing a nonstandard symbol for
this formula is
\begin{displaymath}
  \addtolength{\arraycolsep}{0.25em}
  \left[
    \begin{array}{c|c}
      \true & a = \infty
    \end{array}
  \right].
\end{displaymath}
Since $x$ does not occur in this result, it can be ``freely chosen,'' i.e., the
result is independent of the value of $x$. Put another way, this means that
\begin{displaymath}
  \addtolength{\arraycolsep}{0.25em}
  \left[
    \begin{array}{c|cc}
      \true & x = t & a = \infty
    \end{array}
  \right]
\end{displaymath}
is an extended quantifier elimination result for any standard term $t$ as well.
This degree of freedom disappears when computing standard answers in the
following sense: The term $t$ has to be fixed before the computation of standard
answers. Fixing $t=2$ and using Theorem~\ref{TH:main} we obtain a standard
EQR
\begin{displaymath}
  \addtolength{\arraycolsep}{0.25em}
  \left[
    \begin{array}{c|cc}
      \true & x = 2 & a = 3
    \end{array}
  \right].
\end{displaymath}
The computed standard answer for $a$ depends on this choice of $t$, i.e., it is
possibly invalid for other choices. Fixing for example $x=4$, we see that $a=3$
is not admissible anymore, because substituting these terms into $(x<a)$ yields
``$\false$.'' To compute a standard term for $a$ when $x=4$, we have to start
with
\begin{displaymath}
  \addtolength{\arraycolsep}{0.25em}
  \left[
    \begin{array}{c|cc}
      \true & x = 4 & a = \infty
    \end{array}
  \right],
\end{displaymath}
and apply Theorem~\ref{TH:main} again.

\section{Degree Shifts by Virtual Substitution}\label{SE:shift}
We have already discussed that the feasibility of the virtual substitution
method strongly depends on the degrees of the quantified variables. Among the
heuristics for decreasing these degrees there is an observation, which was
essentially made already by \cite{Weispfenning:97b}, and which was refined and
named \emph{degree shift} by \cite{DSW:98}. The following lemma restates the
result by \citeauthor{DSW:98}:
\begin{lemma}[Degree Shift]\label{LE:shift}
  Consider a quantifier-free Tarski formula $\psi$. Let $g$ be the GCD of all
  exponents of $x$ in $\psi$. We divide all exponents of $x$ in $\psi$ by $g$
  yielding $\psi'$. If $g$ is odd, we have
  $\exists x\psi\longleftrightarrow\exists x\psi'$, if $g$ is even we have
  $\exists x\psi\longleftrightarrow \exists x(x\geq0\land\psi')$. For $g>1$ this
  reduces the degree of $x$ in $\psi$. In order to obtain larger GCDs and
  hence a better degree reduction, we may in advance ``adjust'' the degree $n>0$
  of $x$ in polynomials of the form $x^nf$, where $x$ does not occur in $f$ as
  follows: In equations and disequations, $n$ may be equivalently replaced by
  any $m>0$. In ordering inequalities we may choose any $m>0$ of the same parity
  as $n$.\qed
\end{lemma}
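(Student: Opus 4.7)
The plan is to prove the two main equivalences via a single algebraic identity relating $\psi$ and $\psi'$, and then dispatch the adjustment rules by a case analysis on the atomic relation. By construction $\psi'$ arises from $\psi$ by replacing every occurrence of $x^{kg}$ by $x^k$ in each left-hand-side polynomial. Consequently, as an identity in $x$, $\R\models\psi(x)\longleftrightarrow\psi'(x^g)$, so $\exists x\,\psi(x)$ is equivalent to $\exists x\,\psi'(x^g)$. The quantifier on the right then ranges effectively over the image of the $g$-th power map $\sigma_g\colon\R\to\R$, $\sigma_g(x)=x^g$. If $g$ is odd, $\sigma_g$ is a bijection on $\R$, yielding $\exists x\,\psi'(x^g)\longleftrightarrow\exists x\,\psi'(x)$. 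If $g$ is even, the image of $\sigma_g$ is exactly $[0,\infty)$, yielding $\exists x\,\psi'(x^g)\longleftrightarrow\exists x(x\geq 0\land\psi'(x))$.

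For the adjustment rules, I would consider an atomic constraint whose left-hand side has the form $x^n f$ with $x$ not occurring in $f$, together with any replacement exponent $m>0$. For the relations ``$=$'' and ``$\neq$'' it suffices to observe that $x^n f = 0$ iff $x=0\lor f=0$, independently of the positive exponent $n$, and dually for ``$\neq$''. For the ordering relations $<,\leq,>,\geq$, note that $\sgn(x^n f)=\sgn(x^n)\sgn(f)$, and that $\sgn(x^n)$ depends only on $\sgn(x)$ and the parity of $n$ (equal to $1$ for $x>0$, to $0$ for $x=0$, and to $(-1)^n$ for $x<0$). Hence whenever $n$ and $m$ have the same parity, the polynomials $x^n f$ and $x^m f$ have pointwise identical signs and satisfy exactly the same ordering constraint; performing this replacement inside a single atom of $\psi$ therefore preserves the truth value of $\psi$ at every point, and thus of $\exists x\,\psi$.

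The substance of the argument is a one-line real-analysis observation about the image of the $g$-th power map together with an elementary sign analysis, so no step is genuinely difficult. The main obstacle, such as it is, is notational: one has to pin down a formal definition of ``exponents of $x$ in $\psi$'' (e.g.\ as the union, over all atoms of $\psi$, of the exponents of $x$ occurring with nonzero coefficient in their left-hand-side polynomials) so that the substitution defining $\psi'$ is unambiguous and the identity $\psi(x)\longleftrightarrow\psi'(x^g)$ is manifest.
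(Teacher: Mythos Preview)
Your argument is correct. The identity $\psi(x)\longleftrightarrow\psi'(x^g)$ is immediate from the construction of $\psi'$, and the reduction to the image of $\sigma_g$ cleanly handles both parities of $g$. The sign analysis for the adjustment rules is likewise sound: for equations and disequations the exponent is irrelevant since $x^n f=0\iff x=0\lor f=0$, and for ordering relations the pointwise sign of $x^n$ depends only on the parity of $n$, so replacing $n$ by any $m$ of the same parity preserves the truth value of each atom and hence of $\psi$.

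There is nothing to compare with in the paper: the lemma is presented there as a restatement of a result of Dolzmann, Sturm, and Weispfenning (1998) and is closed with a \texttt{\textbackslash qed} immediately after the statement, i.e., no proof is given. Your write-up supplies exactly the kind of short, self-contained justification one would expect for such a lemma; the only cosmetic point is the one you already flag yourself, namely that ``exponents of $x$ in $\psi$'' should be read as the exponents with nonzero coefficient in the canonical expansion of each left-hand side as a polynomial in $x$ over $\Z[\text{remaining variables}]$.
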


We now want to reanalyze this result as a special case of virtual substitution.
For this, we have to slightly generalize the framework by introducing
\emph{shadow quantifiers}. Recall that we are considering existential problems
of the form
\begin{displaymath}
  \varphi(u_1,\dots,u_m)=\exists x_n\dots\exists x_1\psi(x_1,\dots,x_n,u_1,\dots,u_m).
\end{displaymath}
As a first step we now switch to the equivalent problem
\begin{displaymath}
  \hat\varphi(u_1,\dots,u_m)=\exists \hat x_n\exists x_n\dots
  \exists \hat x_1\exists x_1\psi(x_1,\dots,x_n,u_1,\dots,u_m),
\end{displaymath}
where $\{\hat x_1,\dots,\hat
x_n\}\cap\{x_1,\dots,x_n,u_1,\dots,u_m\}=\emptyset$. That is, the variables
$\hat x_i$ do not occur in $\psi$. Consequently, proceeding with the elimination
as discussed in Section~\ref{SE:vs}, each shadow quantifier $\exists\hat x_i$
imposes a trivial elimination problem.

Strictly following the virtual substitution framework, one would not simply drop
those quantifiers $\exists\hat x_i$ but eliminate them via trivial elimination
sets like $\{(\true,0)\}$. Notice that one cannot use $\emptyset$ as an
elimination set here because $\bigvee\emptyset=\false$. Furthermore, from the
point of view of extended quantifier elimination the use of $\{(\true,0)\}$
formally provides answers also for $\hat x_1$, \dots,~$\hat x_n$.

Consider now w.l.o.g.~the elimination of $\exists x_1$, and assume that we are
in the situation of Lemma~\ref{LE:shift}, where $g>1$ is the GCD of
the, possibly adjusted, degrees of $x_1$ in $\psi$. We use an elimination set
that depends on the parity of $g$:
\begin{displaymath}
  E=\Bigl\{\Bigl(\gamma, \sqrt[g]{\hat x_1}\Bigr)\Bigr\},
\end{displaymath}
where $\gamma$ is $x_1\geq0$ if $g$ is even and ``$\true$'' otherwise. Of
course, we have to define a suitable virtual substitution of $\sqrt[g]{\hat
  x_1}$ for $x_1$ within $\psi$:
\begin{displaymath}
  \Biggl(
    \sum_{j=1}^ka_jx_1^j\mathrel{\varrho}0
  \Biggr)\Biggl[x_1\msub\sqrt[g]{\hat x_1}\Biggr]=
  \Biggl(
  \sum_{j=1}^ka_j\hat
  x_1^{\left\lfloor\frac{\max(j,g)}{g}\right\rfloor}\mathrel{\varrho}0
  \Biggr),
\end{displaymath}
where $a_1$, \dots,~$a_k\in\Z[x_2,\dots,x_n,u_1,\dots,u_m]$ and
$\varrho\in\{=,\leq,<,\geq,>,\neq\}$. The floor function is applied to make the
definition complete; with our elimination sets we will always have divisibility
$g\mid\max(j,g)$. The $\max$ operator takes care of possible degree adjustments
made for the computation of $E$.

Observe that in contrast to the elimination sets studied so far we introduce
here a variable $\hat x_1$ which was not present in $\psi$ before. That variable
is bound by shadow quantifier $\exists\hat x_1$. Intuitively, for the
elimination of $\exists\hat x_1\exists x_1$ we switch from one hard plus one
trivial elimination step to two nontrivial elimination steps.

The termination of quantifier elimination with shadow quantifiers follows from
the termination of the underlying quantifier elimination method plus the fact
that there are only finitely many shadow quantifiers for each regular
quantifier.

To keep the notation simple, we will in the sequel not formally introduce shadow
quantifiers for all quantifiers. Instead, our procedure will silently assume
their presence whenever it performs a degree shift. In the corresponding
pre-EQR this can be recognized by assignments of the form $x_i =
\sqrt[g]{\hat x_i}$, which cannot come into existence otherwise.

\section{Generalization and Extensions of the Method}\label{SE:ext}
In this section we generalize Theorem~\ref{TH:main} to admit more general
pre-EQRs as input. Furthermore, we discuss heuristics for obtaining rational
numbers or even integers instead of root expressions in our standard answers.

\begin{corollary}[Generalized Computation of Standard Answers]\label{CO:main}
  Consider a closed Tarski formula $\varphi = \exists x_n\dots\exists
  x_1\psi(x_1,\dots,x_n)$. Assume that the following is a pre-EQR for
  $\varphi$:
  \begin{displaymath}
    \addtolength{\arraycolsep}{0.25em}
    \left[
      \begin{array}{l|rrr}
        \true &
        x_1 = e_1 &
        \displaystyle
        \dots &
        x_n = e_n\\
        &
        \gamma_1 &
        \dots &
        \gamma_n
      \end{array}
    \right]
  \end{displaymath}
  such that each $e_i$ is of one of the following forms:
  \begin{enumerate}[(a)]
  \item $\frac{a + b\sqrt{c}}{d}$, where $a$, $b$, $c$,
    $d\in\Z[x_{i+1},\dots,x_n]$,
  \item $\frac{a + b\sqrt{c}}{d}\pm\varepsilon$, where $a$, $b$, $c$,
    $d\in\Z[x_{i+1},\dots,x_n]$,
  \item $\pm\infty$,
  \item $\sqrt[g]{x_{i+1}}$, where $g\in\N\setminus\{0\}$,
  \end{enumerate}
  where as usual ``$\pm$'' denotes ``$+$'' or ``$-$.'' Then we can compute root
  expressions $\widetilde{e_1}$, \dots,~$\widetilde{e_n}$ each meeting either
  the specification (a) or the specification (d) above and
  $\widetilde{\gamma_1}$, \dots,~$\widetilde{\gamma_n}$ such that the following
  is a pre-EQR for $\varphi$ as well:
  \begin{displaymath}
    \addtolength{\arraycolsep}{0.25em}
    \left[
      \begin{array}{l|rrr}
        \true &
        \displaystyle
        x_1 = \widetilde{e_1} &
        \displaystyle
        \dots &
        \displaystyle
        x_n = \widetilde{e_n}\\[0.5ex]
        &
        \widetilde{\gamma_1} &
        \dots &
        \widetilde{\gamma_n}
      \end{array}
    \right].
  \end{displaymath}
\end{corollary}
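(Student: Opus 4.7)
The plan is to reduce the corollary to Theorem~\ref{TH:main} by preprocessing the pre-EQR to fit its strict format, and by handling case~(d) (degree shifts) directly along the lines of the inductive construction already used in the theorem. Concretely, I would proceed from the right to the left exactly as in the proof of Theorem~\ref{TH:main}: assuming that the standard root expressions $\widetilde{e_{k+1}},\dots,\widetilde{e_n}$ and the real algebraic values $\alpha_{k+1},\dots,\alpha_n$ have already been computed, I construct $\widetilde{e_k}$, $\widetilde{\gamma_k}$, and $\alpha_k$ by a case distinction on the form of $e_k$.

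In case~(a) I copy the argument from Theorem~\ref{TH:main}(a) verbatim. In case~(c) with $e_k=+\infty$ I reuse Theorem~\ref{TH:main}(b); for $e_k=-\infty$ I apply the same argument to the formula $\xi'(-x_k)$, using the fact that its solution set in $\R$ is also semialgebraic and, by a mirror of Lemma~\ref{LE:eps}(ii) obtained by substituting $-x_1$ for $x_1$, unbounded from below, so a sufficiently small negative rational works. In case~(b) I reduce to Theorem~\ref{TH:main} by the shadow-variable trick illustrated after the theorem: for $e_k = \frac{a+b\sqrt{c}}{d}\pm\varepsilon$ I introduce a fresh variable $h_k$ bound by a shadow quantifier $\exists h_k$ placed immediately to the right of $\exists x_k$, and replace the single pre-EQR column for $x_k$ by the two columns $x_k = h_k\pm\varepsilon$ (with guard $\true$) and $h_k = \frac{a+b\sqrt{c}}{d}$ (with guard $d\neq 0\land c\geq0$). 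The resulting pre-EQR is equivalent to the original one and now fits the strict format of the theorem (for the ``$+\varepsilon$'' variant I additionally rewrite it using Lemma~\ref{LE:eps} in its obvious symmetric form or, equivalently, a sign flip on the shadow variable). After Theorem~\ref{TH:main} returns standard answers for $x_k$ and $h_k$, I simply drop the column for $h_k$, which does not occur in $\psi$, exactly as $h$ was dropped at the end of the worked example.

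Case~(d), which is genuinely new, is handled directly. Here $e_k = \sqrt[g]{x_{k+1}}$ and, by the conventions introduced in Section~\ref{SE:shift}, $\gamma_k$ is $\true$ when $g$ is odd and $x_{k+1}\geq0$ when $g$ is even. I set $\widetilde{e_k}=e_k$ and $\widetilde{\gamma_k}=\gamma_k$; the guard $\gamma_k(\alpha_{k+1})$ holds because~(\ref{EQ:indpreans2}) is a pre-EQR, so the real algebraic number $\alpha_k=\sqrt[g]{\alpha_{k+1}}$ is well-defined and easily computed from a defining polynomial and isolating interval for $\alpha_{k+1}$ (using, e.g., the minimal polynomial of $\alpha_{k+1}$ composed with $y^g$). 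The equivalence $\R\models(\widetilde{\gamma_k}\land\xi')[x_k\msub\widetilde{e_k}]\leftrightarrow\R\models\xi(\alpha_k,\alpha_{k+1},\dots,\alpha_n)$ that one needs to close the induction follows from Lemma~\ref{LE:zwei} applied to the root-expression form $\widetilde{e_k}=\frac{0+1\cdot\sqrt{x_{k+1}}}{1}$ when $g=2$, and from an entirely analogous computation for general $g$ (either by iterated square-root substitutions or by noting that the definition of $[x\msub\sqrt[g]{\hat x}]$ from Section~\ref{SE:shift} preserves truth under real interpretations that satisfy $\gamma_k$).

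The main obstacle will be the bookkeeping in case~(b), namely verifying that the insertion of shadow variables produces a genuine pre-EQR whose associated $\varphi_k$'s agree with the original ones after elimination of the $h_k$'s, and that the commutation arguments based on Lemma~\ref{LE:comm} still go through when the underlying chain of substitutions mixes $\varepsilon$-substitutions, root-expression substitutions, and $g$-th-root substitutions. Once this is in place, the remark after Theorem~\ref{TH:main} that $\widetilde{\gamma_k}=\gamma_k$ is indeed seen to fail in general, since the guards produced for the shadow columns are not the original $\gamma_k$'s but rather $\true$ or $d\neq0\land c\geq0$, as announced in the statement of the corollary.
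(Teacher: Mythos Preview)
Your proposal is correct and follows essentially the same route as the paper's own proof: reduce case~(b) to Theorem~\ref{TH:main} via the artificial/shadow variable trick from the worked example, handle the ``$+\varepsilon$'' and ``$-\infty$'' variants by the obvious symmetry with ``$-\varepsilon$'' and ``$+\infty$,'' and treat case~(d) by directly computing $\alpha_k=\sqrt[g]{\alpha_{k+1}}$ as a real algebraic number. The paper's proof is a terse sketch of exactly these reductions, whereas you have spelled out more of the bookkeeping (the shadow-variable insertion, the commutation issues, and the guard handling); your added caution about verifying that the inserted columns yield a genuine pre-EQR is well placed but does not point to any real difficulty beyond what the paper glosses over.
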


\begin{proof}
  From a theoretical point of view, the treatment of $e_i = \frac{a +
    b\sqrt{c}}{d}-\varepsilon$ can be reduced to Theorem~\ref{TH:main} via the
  introduction of artificial variables as we did for our example in
  (\ref{EQ:expreans}) above. From a practical point of view, it is not hard to
  see how to algorithmically treat such expressions directly. Notice that then
  our corrected answer $\widetilde{e_i}$ will generally be a rational number. As
  already mentioned in the proof of Theorem~\ref{TH:main}(c), one might
  heuristically even find an integer. In both cases the possibly non-trivial
  guard $\gamma_i$ has to be replaced by $\widetilde{\gamma_i}=\true$.

  Next, the treatment of $\frac{a + b\sqrt{c}}{d}+\varepsilon$ and $-\infty$ in
  analogy to $\frac{a + b\sqrt{c}}{d}-\varepsilon$ and $\infty$, respectively,
  is straightforward.

  Finally, having obtained algebraic numbers for $x_{i+1}$, \dots,~$x_n$, one
  can compute an algebraic number also for $\sqrt[g]{x_{i+1}}$ with
  $g\in\N\setminus\{0\}$.
\end{proof}

The proofs for both Theorem~\ref{TH:main} and Corollary~\ref{CO:main} are
constructive. Recall that the ordering of the variables within the given pre-EQR
is such that quantifier elimination has taken place from the left to the right,
while the construction of the standard answers proceeds from the right to the
left.

Consider the computation of $\widetilde{e_k}$ for some $e_k$. Here, the
quantifier elimination direction mentioned above has played an important role in
our proofs: Although $e_{k+1}$, \dots,~$e_{n}$ have been replaced with
$\widetilde{e_{k+1}}$, \dots,~$\widetilde{e_{n}}$, the expression $e_k$ is still
valid. Taking that idea a bit further, we may replace $e_k$ with any valid
expression without affecting the validity of either $e_1$, \dots,~$e_{k-1}$ or
$\widetilde{e_{k+1}}$, \dots,~$\widetilde{e_{n}}$.

In fact, it is sometimes possible to convert a root expression $e_k$ into a
rational number or even an integer as follows: Before processing $e_k$, we check
whether changing it to one of $e_k\pm\varepsilon$ yields a valid pre-EQR for
$\varphi$ as well. This can be done by means of the virtual substitution
\begin{multline*}
  (\widetilde{\gamma_n}\land\dots\land\widetilde{\gamma_{k+1}}\land
  \gamma_k\land\dots\land \gamma_1\land\psi)\\
  [x_1\msub e_1]\dots[x_{k-1}\msub e_{k-1}][x_k\msub
  e_k\pm\varepsilon][x_{k+1}\msub\widetilde{e_{k+1}}]\dots[x_n\msub\widetilde{e_n}].
\end{multline*}
In the positive case, we process $e_k\pm\varepsilon$ instead of $e_k$. In terms
of the proofs of Theorem~\ref{TH:main} and Corollary~\ref{CO:main} this leads to
the cases (c) and (b), respectively, where we generally obtain a rational
solutions $\widetilde{e_k}$ and heuristically even integers.

Finally, it is quite helpful in general to recognize rational numbers among all
occurring real algebraic numbers. This holds in particular for the final
$\alpha_n$, \dots,~$\alpha_1$, as they correspond to the values of the
back-substituted $\widetilde{e_n}$, \dots,~$\widetilde{e_1}$, which may be
complicated nested root expressions. For this one can use the following lemma.
\begin{lemma}[Rational Algebraic Numbers]\label{LE:ratnum}
  Consider a real algebraic number
  $\alpha=\bigl(a_nx^n+\dots+a_0,\left]l,u\right[\bigr)$, where $a_0$,
  \dots,~$a_n\in\Z$, $a_0>0$, $l$, $u\in\Q$, $l>0$. Assume furthermore that
  ${\left]\frac{a_0}{u},\frac{a_0}{l}\right[}\cap{\Z}=\{z\}$. Then $\alpha\in\Q$
  if and only if $\alpha=\frac{a_0}{z}$.
\end{lemma}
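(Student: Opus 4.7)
The plan is to apply the classical rational root theorem. The hypotheses $l>0$ and $l<\alpha<u$ force $\alpha>0$ and $u>0$, so all of $a_0/l$, $a_0/u$, and $a_0/\alpha$ are well-defined positive reals. Combined with $a_0>0$, dividing preserves the ordering in the form $a_0/u<a_0/\alpha<a_0/l$.

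The implication ``$\alpha=a_0/z\Rightarrow\alpha\in\Q$'' is trivial, so the content lies in the forward direction. Assume $\alpha\in\Q$ and write $\alpha=p/q$ in lowest terms with $q>0$; by positivity of $\alpha$ we then also have $p>0$. Since $\alpha$ is a root of the integer polynomial $a_nx^n+\dots+a_0$, the rational root theorem yields $p\mid a_0$. Consequently $z':=a_0/\alpha=(a_0/p)\cdot q$ is a positive integer, and the inequalities above translate into $a_0/u<z'<a_0/l$. Thus $z'\in\mathopen]a_0/u,a_0/l\mathclose[\cap\Z=\{z\}$, so $z'=z$ and $\alpha=a_0/z$ as claimed.

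There is no substantial obstacle here; the only delicate point is the careful use of the sign hypotheses $a_0>0$ and $l>0$, which are what permit inverting the bounds $l<\alpha<u$ without flipping them and guarantee that $z'$ lands strictly inside the open interval $\mathopen]a_0/u,a_0/l\mathclose[$ rather than at an endpoint.
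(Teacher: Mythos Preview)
Your proof is correct and follows essentially the same route as the paper: both use that $p\mid a_0$ (the paper derives this via the factorization $q\cdot f(x)=(qx-p)\cdot g(x)$, you cite the rational root theorem directly), conclude that $a_0/\alpha$ is an integer lying strictly between $a_0/u$ and $a_0/l$, and identify it with $z$. The arguments are interchangeable.
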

\begin{proof}
  Let $\alpha\in\Q$. From $l>0$ it follows that $\alpha>0$, say
  $\alpha=\frac{p}{q}$, where $p$, $q\in\Z$, $p>0$, $q>0$. This admits the
  following factorization:
  \begin{displaymath}
    q\cdot\sum_{i=0}^na_ix^i=(qx-p)\cdot\sum_{i=0}^{n-1}a_ix^i.
  \end{displaymath}
  It follows that $p\mid a_0$, say $pp'=a_0$, and we obtain
  $\alpha=\frac{pp'}{qp'}=\frac{a_0}{qp'}$. On the other hand,
  $l<\frac{a_0}{qp'}<u$, which is equivalent to
  $\frac{a_0}{u}<qp'<\frac{a_0}{l}$, and it follows that $qp'=z$. Together
  $\alpha=\frac{a_0}{qp'}=\frac{a_0}{z}$. The converse implication is obvious.
\end{proof}
The lemma can be straightforwardly generalized to arbitrary intervals
$\left]l,u\right[$.

\section{Implementation and Application Examples}\label{SE:ex}
We have implemented our method in Redlog, which is a part of the computer
algebra system Reduce. Reduce is freely available under a modified BSD
license.\footnote{\url{http://reduce-algebra.sourceforge.net}} Technically, our
implementation is an extension of Redlog's extended quantifier elimination
\texttt{rlqea} by a switch \texttt{rlqestdans}, which toggles the computation of
standard answers.

In the following subsections we are going to revisit a number of applications of
extended quantifier elimination that have been documented in the scientific
literature. In each case we are going to briefly explain the underlying problem,
recompute the solutions with nonstandard answers, and finally compute solutions
with standard answers using our approach as described in this article.

Since Redlog is very actively developed and improved, and the considered
applications date back up to more than 15 years, the nonstandard answers
obtained here partly differ from those reported in the literature. Of course, in
such cases both variants are correct.

All computations have been carried out with the CSL version of Reduce, revision
2465, using 4~GB RAM on a 2.4~GHz Intel Xeon E5-4640 running 64~bit Debian
Linux~7.3.

\subsection{Computational Geometry}
Besides many standard problems from computational geometry,
\cite{SturmWeispfenning1997} consider in their Example~10 the reconstruction of
a cuboid wireframe from a photography taken from the origin along the $x_3$-axis
with a lens of focal length 5.

The answers obtained by extended quantifier elimination is going to describe
vectors $\ee_1$, $\ee_2$, $\ee_3\in\R^3$ generating the cuboid together with a
vector $\vv\in\R^3$ describing its translation from the origin. The input
formula, which contains in addition points $\ii\in\R^2$ on the camera sensor,
contains 15 quantifiers:
\begin{displaymath}
  \exists \ee_1
  \exists \ee_2
  \exists \ee_3
  \exists \vv
  \forall \ii
  \bigl( (\iota'\longleftrightarrow\pi_0)
  \land
  \exists k (59k\vv = (100, 200, 295k + 295))
  \bigr).
\end{displaymath}
The formula $\iota'(\ee_1,\ee_2,\ee_3,\vv,\ii)$, which has been obtained by
regular quantifier elimination earlier, generically describes that a point $\ii$
lies in the image of a cuboid generated by $\ee_1$, $\ee_2$, $\ee_3$, and
translated by $\vv$. The formula $\pi_0(\ii)$ is a quantifier-free description
of one concrete image. The remaining part of the input formula fixes $\ii =
\bigl(\frac{100}{59}, \frac{200}{59}\bigr)$ to be the image of the origin of the
cuboid.

Extended quantifier elimination yields ``$\true$'' if and only if $\pi_0$ is a
picture of a cuboid at all. In the positive case, the answers will provide
suitable vectors $\ee_1$, $\ee_2$, $\ee_3$, and $\vv$.

For $\pi_0$ as considered by \cite{SturmWeispfenning1997} in Example~10, the
extended quantifier elimination yields ``$\true$'' together with the following
nonstandard answers:
\begin{align*}
  \ee_{1} &= \left(5\infty_1, \frac{7\infty_1}{2}, \frac{5\infty_1}{2}\right),&
  \ee_{2} &= \left(\infty_1, 2\infty_1, \frac{-24\infty_1}{5}\right),\\[1ex]
  \ee_{3} &= \left(\frac{-109\infty_1}{65}, \frac{53\infty_1}{26},
    \frac{\infty_1}{2}\right),&
  \vv &= \left(5\infty_1, 10\infty_1,\frac{59\infty_1 + 20}{4}\right).
\intertext{Our method fixes $\infty_1=1$, which yields the following standard answers:}
  \ee_{1} &= \left(5, \frac{7}{2}, \frac{5}{2}\right),&
  \ee_{2} &= \left(1, 2, -\frac{24}{5}\right),\\[1ex]
  \ee_{3} &= \left(-\frac{109}{65}, \frac{53}{26}, \frac{1}{2}\right),&
  \vv &= \left(5, 10, \frac{79}{4}\right).
\end{align*}
The entire computation takes 189\,s, of which the computation of the standard
answers takes less than 1\,ms.

\subsection{Motion Planning}
\cite{Weispfenning2001} has studied motion planning problems in dimension two.
Both the object to be moved and the free space between given obstacles are
semilinear sets. Extended quantifier elimination is used to decide whether a
geometrical object can be moved from an initial to a final destination in at
most $n$ moves, where the trajectory of each move is a line segment. In the
positive case, the answers describe the coordinates $\uu_1$,
\dots,~$\uu_n\in\R^2$ of the object after each of the $n$ moves. Accordingly,
the input formulas contain $2n$ variables in the prenex existential block.

We have applied our answer correction to three of the examples discussed by
\cite{Weispfenning2001}. For the concrete input formulas and pictures of the
scenery we refer to that publication.

For Example~6.4, we obtain the following nonstandard answers:
\begin{displaymath}
  \uu_{1} = \left(5-\varepsilon_{1}, 5-\varepsilon_{1}\right),\quad
  \uu_{2} = \left(5-\varepsilon_{1}, \frac{-2\varepsilon_{1}+1}{2}\right),\quad
  \uu_{3} = \left(9, \frac{9}{2}\right).
\end{displaymath}
Our method fixes $\varepsilon_{1}=\frac{3}{16}$, which yields the following
standard answers:
\begin{displaymath}
  \uu_{1} = \left(\frac{77}{16}, \frac{77}{16}\right),\quad
  \uu_{2} = \left(\frac{77}{16}, \frac{5}{16}\right),\quad
  \uu_{3} = \left(9, \frac{9}{2}\right).
\end{displaymath}
The entire computation takes 60\,ms, of which the computation of the standard
answers takes less than 1\,ms. Tables~\ref{TAB:mp1} and \ref{TAB:mp2} summarize
these results along with the two other examples.

\begin{table}[t]
  \centering
  \addtolength{\tabcolsep}{0.1em}
  \renewcommand{\arraystretch}{1.25}
  \begin{tabular}{llllrr}
    \hline
    Example &
    \multicolumn{1}{c}{$\uu_1$} &
    \multicolumn{1}{c}{$\uu_2$} &
    \multicolumn{1}{c}{$\uu_3$} &
    \multicolumn{1}{c}{time}\\
    \hline\\[-2.1ex]
    6.4 &
    $\left(5-\varepsilon_{1}, 5-\varepsilon_{1}\right)$ &
    $\left(5-\varepsilon_{1}, \frac{-2\varepsilon_{1}+1}{2}\right)$&
    $\left(9, \frac{9}{2}\right)$ &
    0.06\,s\\
    6.8 &
    $\left(0, 3+\varepsilon_{1}\right)$ & $\left(3-\varepsilon_{1}, 6\right)$ &
    $\left(7, 6\right)$ & 9.5\,s\\
    6.9 &
    $\left(0, 3+\varepsilon_{1}\right)$ & $\left(3-\varepsilon_{1}, 6\right)$ &
    & 0.28\,s\\
    \hline
  \end{tabular}
  \caption{Summary of nonstandard answers and computation times for motion planning
    examples considered by~\cite{Weispfenning2001}.\label{TAB:mp1}}
\end{table}

\begin{table}[t]
  \centering
  \addtolength{\tabcolsep}{0.9em}
  \renewcommand{\arraystretch}{1.25}
  \begin{tabular}{lrlll}
    \hline
    Example &
    $\varepsilon_1$ &
    \multicolumn{1}{c}{$\uu_1$} &
    \multicolumn{1}{c}{$\uu_2$} &
    \multicolumn{1}{c}{$\uu_3$}\\
    \hline\\[-2.1ex]
    6.4 &
    $\frac{3}{16}$ &
    $\left(\frac{77}{16}, \frac{77}{16}\right)$ &
    $\left(\frac{77}{16}, \frac{5}{16}\right)$ &
    $\left(9, \frac{9}{2}\right)$\\
    6.8 &
    1 &
    $\left(0, 4\right)$ &
    $\left(2, 6\right)$ &
    $\left(7, 6\right)$\\
    6.9 &
    1 &
    $\left(0, 4\right)$ &
    $\left(2, 6\right)$ &\\
    \hline
  \end{tabular}
  \caption{Summary of standard answers for motion planning examples considered by
    \cite{Weispfenning2001}. In all cases the
    time spent for the computation of $\varepsilon_1$ was less than 1\,ms.
\label{TAB:mp2}}
\end{table}

\subsection{Models of Genetic Circuits}
Recently, symbolic methods for the identification of Hopf bifurcations in vector
fields arising from biological networks or chemical reaction networks have
received considerable attention in the literature
\citep{SturmWeber:08a,SturmWeber:09a,ErramiSturm:11a,WeberSturm:11a,ErramiEiswirth:13a}.
Given a polynomial vector field, \cite{El-KahouiWeber:00a} introduced a method,
which automatically generates first-order Tarski formulas describing the
existence of a Hopf bifurcation in terms of the parameters. Then real quantifier
elimination is applied to obtain corresponding necessary and sufficient
conditions. For efficiency reasons, one often existentially quantifies all
parameters and applies extended quantifier elimination. In the positive case,
the answers provide one set of parameter values giving rise to a Hopf
bifurcation.

Based on models introduced by \cite{BoulierEtAl2007}, \cite{SturmWeber:08a} and
\cite{SturmWeber:09a} used the approach sketched above to automatically derive
the existence of Hopf bifurcations for the gene regulatory network controlling
the circadian clock of a certain unicellular green alga. The input formula is
\begin{eqnarray*}
  \lefteqn{\underline{\exists}(0 < \vv_{1} \land 0 < \vv_{3} \land 0 < \vv_{2}
  \land 0 < \vartheta \land {0 < \gamma_{0}}
  \land 0 < \mu \land 0 < \delta\land 0<\alpha}\\
    &&\quad{}\land {\vartheta\cdot
      (\gamma_{0}-\vv_{1}-\vv_{1} \vv_{3}^9) = 0} \land
    \lambda_{1}\vv_{1}+\gamma_{0} \mu-\vv_{2} = 0    \\
    &&\quad{}\land9 \alpha
    (\gamma_{0}-\vv_{1}-\vv_{1} \vv_{3}^9)+\delta (\vv_{2}-\vv_{3}) = 0
    \land
    0 < \vartheta \delta+\vartheta \vv_{3}^9 \delta +9 \lambda_{1}
    \vartheta \vv_{1} \vv_{3}^8 \delta
    \\
    &&\quad{}\land162 \vartheta \vv_{3}^{17} \alpha
    \vv_{1}+162 \vartheta \alpha \vv_{1} \vv_{3}^8 +162 \alpha \vv_{1}
    \vv_{3}^8 \delta+\vartheta+2 \vartheta \vv_{3}^9 \delta +\vartheta^2
    \vv_{3}^{18} \delta \\
    &&\qquad{}+\vartheta \vv_{3}^9+2 \vartheta \delta+81 \alpha
    \vv_{1} \vv_{3}^8 \vartheta \delta +81 \alpha \vv_{1} \vv_{3}^{17} \vartheta
    \delta+\delta^2+\vartheta \delta^2+\vartheta^2 \delta+\vartheta^2\\
    &&\qquad{}+2\vartheta^2 \vv_{3}^9+\vartheta^2 \vv_{3}^{18} +6561 \alpha^2 \vv_{1}^2
    \vv_{3}^{16}+2 \vartheta^2 \vv_{3}^9 \delta+\delta +81 \alpha \vv_{1}
    \vv_{3}^8+\vartheta \vv_{3}^9 \delta^2\\
    &&\qquad{}-9 \lambda_{1} \vartheta \vv_{1}
    \vv_{3}^8 \delta = 0),
  \end{eqnarray*}
for which we obtain the following nonstandard answers:
\begin{eqnarray*}
  \gamma_0 & = & \textstyle\frac{8\sqrt[9]{\infty_2}^{19}\infty_3 + 16\sqrt[9]{\infty_2}^{10}
                 \infty_3 + 8\sqrt[9]{\infty_2}\infty_3}{
                 729\infty_1^2\infty_2^3 + 1458\infty_1^2\infty_2^2 +
                 729\infty_1^2\infty_2 - 486\infty_1\infty_2^2\infty_3
                 - 486\infty_1\infty_2\infty_3 + 9\infty_2\infty_3^2},\\[2ex]
  \mu & = & \textstyle\frac{729\infty_1^2\infty_2^3 + 1458\infty_1^2\infty_2^2 + 729
            \infty_1^2\infty_2 - 486\infty_1\infty_2^2\infty_3
            - 486\infty_1\infty_2\infty_3 + \infty_2\infty_3^2 - 8\infty_3^2}
            {8\infty_2^2\infty_3 + 16\infty_2\infty_3 + 8\infty_3},\\[2ex]
  \vartheta & = & {{- 6561\infty_1^4\infty_2^4 - 26244\infty_1^4\infty_2^3 -
                  39366\infty_1^4\infty_2^2 - 26244\infty_1^4\infty_2
                  - 6561\infty_1^4 + 4374\infty_1^3\infty_2^3\infty_3
                  \atop
                  + 13122\infty_1^3\infty_2^2
                  \infty_3
                  + 13122\infty_1^3\infty_2\infty_3
                  + 4374\infty_1^3\infty_3 - 54\infty_1\infty_2\infty_3^3 - 54\infty_1\infty_3^3 +
                  \infty_3^4}
                  \over
                  {{{\scriptstyle 6561\infty_1^4\infty_2^5 + 32805\infty_1^4\infty_2^4
                  + 65610\infty_1^4\infty_2^3 + 65610\infty_1^4\infty_2^2
                  + 32805\infty_1^4\infty_2 + 6561\infty_1^4
                  \atop\scriptstyle
                  - 8748\infty_1^3\infty_2^4\infty_3 - 34992\infty_1^3\infty_2^3\infty_3
                  - 52488\infty_1^3
                  \infty_2^2\infty_3 - 34992\infty_1^3\infty_2\infty_3 - 8748\infty_1^3\infty_3}
                  \atop\scriptstyle
                  +3078\infty_1^2\infty_2^3\infty_3^2
                  + 9234\infty_1^2
                  \infty_2^2\infty_3^2 + 9234\infty_1^2\infty_2\infty_3^2 + 3078
                  \infty_1^2\infty_3^2 - 108\infty_1\infty_2^2\infty_3^3}
                  \atop
                  - 216\infty_1\infty_2\infty_3^3 - 108\infty_1\infty_3^3 + \infty_2
                  \infty_3^4 + \infty_3^4}},\\[2ex]
  \mathbf{v}_1 & = & \textstyle\frac{8\sqrt[9]{\infty_2}^{10}\infty_3 + 8\sqrt[9]{\infty_2}\infty_3}
                     {729\infty_1^2\infty_2^3 + 1458\infty_1^2\infty_2^2 + 729\infty_1^2\infty_2 -
                     486\infty_1\infty_2^2\infty_3
                     - 486\infty_1\infty_2\infty_3 + 9\infty_2\infty_3^2},\\[2ex]
  \mathbf{v}_2 & = & \sqrt[9]{\infty_2},\quad
                     \mathbf{v}_3  =  \sqrt[9]{\infty_2},\quad
                     \alpha  =  \infty_1,\quad
                     \delta  =  1,\quad
                     \lambda_1  =  \infty_3.
\end{eqnarray*}

Our method fixes $\infty_1=1$, $\infty_2=9$, and $\infty_3=87481$, which yields
the following standard solution:
\begin{align*}
  \gamma_0 & =  \frac{69984800\cdot \sqrt[9]{9}}{616061191401},&
  \mu & =  \frac{3827162521}{69984800},&
  \vartheta & =  \frac{7652917261}{76056937210},\\[1ex]
  \mathbf{v}_1 & =  \frac{6998480\cdot \sqrt[9]{9}}{616061191401},&
  \mathbf{v}_2 & =  \sqrt[9]{9},&
  \mathbf{v}_3 & =  \sqrt[9]{9},\\[1ex]
  \alpha & =  1,&
  \delta & =  1,&
  \lambda_1 & =  87481.\\
\end{align*}
The entire computation takes 370\,ms, of which the computation of the standard
answers takes 140\,ms.

\subsection{Mass Action Systems}
We are now going to discuss another example about Hopf bifurcation. This time,
the considered system is a chemical reaction system, viz.~the famous and
well-studied phosphofructokinase reaction. It has been firstly analyzed with
symbolic methods by
\citeauthor{GatermannEtAl2005}~(\citeyear{GatermannEtAl2005}, Example~2.1). We
adopt here the first-order formulation discussed by \cite{SturmWeber:08a} and
\cite{SturmWeber:09a} following the approach sketched in the previous
subsection.

We obtain nonstandard answers of the following form:
\begin{align*}
  k_{21} & = K_{21}(\infty_1, \infty_2, \infty_3, \infty_4, \varepsilon_1),&
  k_{34} & = \infty_1,&
  k_{43} & = \infty_2,\\
  k_{46} & = K_{46}(\infty_1, \infty_2, \infty_3, \infty_4, \infty_5, \varepsilon_1),&
  k_{64} & =  \infty_5,&
  k_{65} & =  \infty_3,\\
  k_{56} & = K_{56}(\infty_1, \infty_2, \infty_3, \infty_4, \infty_5, \varepsilon_1),&
  \mathbf{v}_{1} & =  \frac{\infty_2\infty_4}{\infty_1},\\
  \mathbf{v}_{2} & = V_{2}(\infty_1, \infty_2, \infty_3, \infty_4, \infty_5, \varepsilon_1),&
  \mathbf{v}_{3} & =  \infty_4.
\end{align*}
The nonstandard terms $K_{21}$, \dots,~$K_{56}$, $V_2$ are so large that
we cannot explicitly display them here. To give an idea, $K_{46}$ would fill
more than 16 pages in this document.

Our method fixes $\infty_1=\infty_2=\infty_3=\infty_4=1$, $\infty_5=20$, and
$\varepsilon_1=2(\sqrt{2}-1)$, which yields the following standard solution:
\begin{align*}
  k_{21} & = 3,&
  k_{34} & =  1,&
  k_{43} & =  1,\\
  k_{46} & =  \frac{\sqrt{3457} + 1}{8},&
  k_{64} & =  20,&
  k_{65} & =  1,\\
  k_{56} & =  \frac{-\sqrt{3457} + 159}{6},&
  \mathbf{v}_{1} & =  1,\\
  \mathbf{v}_{2} & =  \frac{- \sqrt{3457} + 159}{24},&
  \mathbf{v}_{3} & =  1.
\end{align*}
The entire computation takes 13.2\,s, of which the computation of the standard
answers takes 0.1\,s.

\subsection{Sizing of Electrical Networks}
\citeauthor{Sturm1999}~(\citeyear{Sturm1999}, Section~5) has applied generic
quantifier elimination to the sizing of a BJT amplifier. Description of a
circuit is given as a set of operating point equations $E_1$ and a set of AC
conditions $E_2$. For the concrete equations we refer to the mentioned
publication. The system $E_1\land E_2$ has to be solved w.r.t. the main
variables $M=\{r_1,\dots,r_8,c_3\}$ in terms of parameter variables
$P=\{v_{\textrm{cc}},a_{\textrm{high}},a_{\textrm{low}},p,z_{\textrm{in}},z_{\textrm{out}}\}$.
Fixing values of the parameters to
\begin{displaymath}
  v_{\textrm{cc}} = 3,\ a_{\textrm{high}} = 3,\ a_{\textrm{low}} = 2,\ p = 12,\
  z_{\textrm{in}} = 5,\ z_{\textrm{out}} = 5,
\end{displaymath}
the answers contain one nonstandard term:
\begin{align*}
  &r_{1} = \frac{4457058395}{5180672},\quad
    r_{2} = \frac{4457058395}{2590336},
  &r_{3} = -\frac{4457058395}{1295168},&\\[1em]
  &r_{4} = -\frac{4182864929375836679}{128066211840000},\quad
    r_{5} = \infty_{1},
  &r_{6} = \frac{282999424999}{804520000},&\\[1em]
  &r_{7} = 5,\quad
    r_{8} = \frac{25509595605337086755}{20836792295619328},
  &c_{3} = \frac{647584}{13371175185}.&
\end{align*}
Our method fixes $\infty_1=1$, which yields a standard answer for $r_5$. The
entire computation takes less than 2\,ms, of which the computation of the
standard answer takes less than 1\,ms.

\subsection{A Linear Feasibility Example}
\citeauthor{KorovinEtAl2009}~(\citeyear{KorovinEtAl2009}, Section~9) have
considered a small linear existential problem to demonstrate the difference
between their conflict resolution method and the Fourier--Motzkin elimination
method. The following are nonstandard answers for that problem computed by
Redlog:
\begin{align*}
x_{1} &= -\frac{8}{13},&
x_{2} &= \frac{1 - 65\varepsilon_{1}}{65},&
x_{3} = \frac{-14 + 13\varepsilon_{2}}{13},\\[1em]
x_{4} &= \frac{-302 -195\varepsilon_{1} + 65\varepsilon_{2}}{130},&
x_{5} &= \frac{-30 + 26 \varepsilon_{2}}{39}.&
\end{align*}
Our method fixes $\varepsilon_1=\frac{1}{65}$ and $\varepsilon_2=\frac{1}{13}$,
which yields the following standard answers:
\begin{displaymath}
  x_{1} = -\frac{8}{13},\quad
  x_{2} = 0,\quad
  x_{3} = -1,\quad
  x_{4} = -\frac{30}{13},\quad
  x_{5} = -\frac{28}{39}.
\end{displaymath}
The entire computation takes 3\,ms, of which the computation of the standard
answers takes less than 1\,ms.

\section{Conclusions and Future Work}\label{SE:conclusions}
We have introduced extended quantifier elimination as a general concept, and
focused on virtual substitution as one possible method for its realization.
Successful applications of extended quantifier elimination via virtual
substitution have been documented in the literature over the past two decades.
One problem there was that the answers obtained via virtual substitution in
general contain nonstandard symbols, which can be hard to interpret. For fixed
parameters the present work resolves this issue by providing a complete
post-processing method for fixing all answers to standard real numbers. We have
implemented our method, and applied it to various extended quantifier
elimination problems from the literature. In these experiments we have generally
obtained standard answers that are meaningful in terms of the modeled problems.
In most cases our post-processing method does not significantly contribute to
the overall computation time. It is noteworthy that our method is compatible
with our recent work on combining virtual substitution with learning
techniques~\citep{KorovinEtAl:2014a}.

Recall from our discussion in Section~\ref{SE:ext} on finding integer and
rational answers that there is often a considerable degree of freedom in the
choice of standard answers. In future this can be further exploited in many
interesting ways: For instance, using extended quantifier elimination methods as
a theory solver in the context of Satisfiability Modulo Theory (SMT)
solving~\citep{NieuwenhuisEtAl:06a}, in particular when combining several
theories in a Nelson--Oppen~(\citeyear{NelsonOppen:1979a}) style, one is
specifically interested in avoiding identical answers for different variables.
Alternatively, one can try to identify certain variables, which might be
interesting in certain contexts. As another option, there is only a small step
to automatically generating for a given pre-EQR code for an interactive
procedure that suggests ranges and finds possible choices for certain variables
in cooperation with the user. In cooperation projects with researchers from the
sciences we have had the experience that those researchers often have a
surprisingly precise idea about reasonable choices for certain variables.

A theoretically way more challenging step would be the generalization of our
method to the parametric case. Recall that Proposition~\ref{PR:notparametric}
has shown that it is not possible in general to determine real standard values
for infinitesimals and infinities without fixing values for the parameters
beforehand. Nevertheless, it might well be possible to devise on the basis of
our work here a complete method for \emph{symbolically} replacing nonstandard
symbols with standard terms. In the example in
Proposition~\ref{PR:notparametric} the infinitesimal $\varepsilon_1$ could be
replaced e.g.~with $\frac{1 - a}{2}$ yielding a standard extended quantifier
elimination result.

\section*{Acknowledgments}
This research was supported in part by the German Transregional Collaborative
Research Center SFB/TR 14 AVACS and by the DFG/ANR Programme Blanc Project STU
483/2-1 SMArT.

\end{document}